\documentclass[lettersize,journal]{IEEEtran}
\usepackage{amsmath,amsfonts}
\usepackage{algorithmic}
\usepackage{algorithm}
\usepackage{array}
\usepackage[caption=false,font=footnotesize,labelfont=rm,textfont=rm]{subfig}
\usepackage{textcomp}
\usepackage{stfloats}
\usepackage{url}
\usepackage{verbatim}
\usepackage{graphicx}
\usepackage{cite}
\usepackage{hyperref}
\usepackage{tabularray}
\usepackage{color}
\usepackage{booktabs}
\usepackage{amsmath}
\usepackage{amssymb}
\newtheorem{theorem}{Theorem} 
 
\newtheorem{lemma}{Lemma} 

\newtheorem{proposition}{Proposition}
\newtheorem{remark}{Remark}

\begin{document}

\title{Low-Altitude UAV-Assisted Bistatic ISAC: Closed-form 3D CRLB and Coverage Analysis}

\author{Haoyu Jiang,~\IEEEmembership{Student Member,~IEEE,} Hengyou Kong, Xiaoli Xu,~\IEEEmembership{Member,~IEEE,} Yong Zeng,~\IEEEmembership{Fellow,~IEEE,}

\thanks{ 
H. Jiang, H. Kong, X. Xu and Y. Zeng are with the National Mobile Communications Research Laboratory, Southeast University, Nanjing 210096, China. Y. Zeng is also with the Purple Mountain Laboratories, Nanjing 211111, China (e-mail: \{230258936, 230268151, xiaolixu, yong\_zeng\}@seu.edu.cn). (Corresponding author: Yong Zeng.)}

}

\maketitle

\begin{abstract}
This paper investigates the fundamental performance limits of three-dimensional (3D) localization in unmanned aerial vehicle (UAV)-assisted integrated sensing and communication (ISAC) systems. Specifically, a base station (BS) estimates the 3D position of a sensing target with the aid of a UAV acting as a flexible aerial anchor node. We derive a closed-form expression for the 3D Cramér–Rao lower bound (CRLB), which explicitly quantifies the achievable localization accuracy as a function of both the UAV’s location and the target’s position. The CRLB is shown to decompose naturally into three distinct components, arising from signal propagation delay, angular measurements, and their coupling effect, respectively. To validate the analytical results, we consider a representative orthogonal frequency-division multiplexing (OFDM)-based ISAC system and demonstrate that the derived CRLB closely predicts the performance of maximum-likelihood estimation across diverse geometric configurations and UAV mobility patterns. Furthermore, we introduce the notion of CRLB-constrained sensing coverage to characterize the spatial region within which a prescribed localization accuracy can be guaranteed. Through local boundary approximations and coverage-size evaluations, we reveal how UAV displacement, altitude, and the CRLB threshold jointly shape the extent and geometry of the reliable sensing region.
\end{abstract}

\begin{IEEEkeywords}
Integrated sensing and communication (ISAC), unmanned aerial vehicle (UAV), Cramér-Rao lower bound (CRLB), bistatic sensing, target localization, sensing coverage.
\end{IEEEkeywords}

\section{Introduction}

\IEEEPARstart{T}{he} sixth-generation (6G) wireless networks are expected to evolve from connection-centric systems toward environment-aware infrastructures that support data transmission, target detection, localization, and situational awareness \cite{liu_jsac_isac_2022,liu_cst_limits_2022,qianglong_tutorial,Zhang_ISAC_tutorial,Wei_ISAC_IoT}. In low-altitude wireless networks, the integration of unmanned aerial vehicles (UAVs) and integrated sensing and communication (ISAC) can be broadly viewed from two paradigms. The first is ISAC for UAVs, where UAVs or UAV swarms are aerial targets to be detected, localized, and tracked by terrestrial or distributed sensing infrastructures \cite{jiang_raa_uav_isac,Mu_UAV_comag,yuxuan_magazine}. The second is UAV-assisted ISAC, where UAVs serve as flexible aerial wireless nodes to actively support both communication and sensing tasks \cite{zeng_procieee_uav_2019,mozaffari_cst_uav_2019,jiang_commmag_lae_isac_2025}. Compared with fixed terrestrial infrastructures, UAV-assisted ISAC can exploit controllable mobility, favorable line-of-sight (LoS) air-ground channels, and adjustable sensing geometry to extend wireless sensing and communication capabilities for infrastructure inspection, traffic monitoring, emergency response, aerial logistics, and other low-altitude economy applications.

In ISAC systems, an important sensing objective is to estimate the state or position of a target from the received wireless signals. To evaluate the fundamental accuracy limit of such inference tasks, Cram\'{e}r-Rao lower bound (CRLB) has been widely adopted as a rigorous performance metric in statistical signal processing, array processing, and wireless sensing \cite{kay_book_1993,stoica_nehorai_tassp_1989,stoica_moses_book_2005}. It characterizes the lowest achievable mean-squared error of unbiased estimators, and thus provides a principled way to assess sensing accuracy beyond received echo power alone. CRLB depends on multiple physical-layer factors, including transmit power, observation time, signal bandwidth, array aperture, propagation distance, and observation direction. As a result, CRLB analysis is useful for understanding how wireless resources and spatial deployment jointly determine the fundamental limit of sensing capability of an ISAC system.

A substantial body of work has investigated CRLB-based sensing performance analysis and ISAC design. For example,
for near-field sensing with extremely large-scale MIMO (XL-MIMO), closed-form CRLBs have been derived under the
spherical-wave propagation model \cite{wang_tsp_nearfield_crlb_2024,Yuan_nearfield_tccn}. With the CRLB as a sensing-accuracy metric, existing ISAC studies have optimized transmit beamforming, covariance matrices, and waveform resources to improve sensing accuracy while maintaining communication performance \cite{liu_tsp_crlb_2022,hua_twc_mimo_2024,ren_twc_fundamental_2024}. Intelligent reflecting surfaces, pinching antennas, sparse MIMO arrays, movable antennas, or multiple coordinated transmitters/receivers have also been exploited to enrich the sensing observations available at the receiver \cite{song_tsp_irs_2023,Jiang_pinching_tcom,Xiu_movable_twc,min_sparse_tsp,cheng_twc_networked_2024,mao_twc_cellfree_2024,xuxiaoli}. In parallel, environment-aware approaches, such as channel knowledge map (CKM)-based design, have been investigated to exploit location-specific prior channel information for improving sensing and communication performance \cite{Wu_ckm_twc}. These works provide important theoretical and algorithmic foundations for CRLB-based ISAC design. However, these existing CRLB results mainly focus on intermediate sensing parameters, such as delay, angle, or Doppler, rather than the eventual  target position error. Even when the position-estimation CRLB can be computed, the impact of the anchor-node position on the localization CRLB is not explicitly revealed in closed form. Besides, to the best of our knowledge, a closed-form CRLB expression for three-dimensional (3D) localization that explicitly characterizes the dependence on the anchor-node position remains unavailable. Such a result is important for guiding CRLB-oriented anchor node positioning, which is especially appealing for UAV-assisted ISAC with UAV serving as flexible aerial anchor node. 

Compared to conventional terrestrial ISAC systems, UAV-assisted ISAC brings an additional degree of freedom, for which the position of the UAV anchor node can be flexibly determined \cite{Fei_UAVisac_commmag,Mao_uavisac_wirecom}. This controllability is qualitatively different from adjusting a transmit beam or allocating waveform resources, because moving the UAV changes the physical observation geometry itself. Such mobility has been extensively exploited for UAV-assisted wireless communications \cite{zeng_trajectory_twc,zeng2026capacity}, and it has been extended to UAV-assisted ISAC via trajectory optimization, maneuver control, beamforming, periodic sensing, and joint communication-localization tasks \cite{lyu_twc_maneuver_2023,deng_twc_adaptable_2023}. These studies confirm that UAV mobility is a valuable design freedom for ISAC. They also suggest a fundamental question: before optimizing a full trajectory or a complete resource-allocation policy, it is informative to understand how a UAV's spatial deployment influences the fundamental limit of  localization accuracy of the system.

Target localization in ISAC is determined not only by the received sensing SNR, but also by how delay and angular observations constrain the target position after being transformed into the Cartesian domain. Since this transformed information depends on the relative geometry among the BS, UAV, target, and receiving array, different UAV placements may lead to substantially different localization accuracy, or even geometrically ill-conditioned sensing configurations. This view is also relevant to active environmental sensing, where the UAV is deployed not only to maintain a wireless link, but also to acquire sensing information about an operating space whose detailed target distribution may not be known beforehand, similar in spirit to coverage-oriented ISAC systems \cite{Gan_ISACcoverage_jsac}.

In this paper, we investigate a UAV-assisted bistatic ISAC system for 3D target localization, where a UAV acts as a flexible aerial anchor, and the target position is estimated from the bistatic delay and the angular information measured at the BS. We derive a closed-form CRLB expression that explicitly characterizes the achievable localization accuracy as a function of both the UAV location and the target position. The derived CRLB further reveals the individual contributions of propagation delay, angular measurements, and their coupling effect. Based on this result, we characterize the CRLB-constrained sensing coverage, namely the spatial region where a prescribed localization-accuracy requirement can be guaranteed, and investigate how UAV deployment, altitude, and the CRLB threshold affect the reliable sensing region.

The main contributions of this paper are summarized as follows.
\begin{itemize}
    \item
    We derive a closed-form position-estimation CRLB for 3D UAV-assisted bistatic ISAC. The derived result is general in the sense that it is not restricted to a specific waveform realization or a fixed UPA size. The expression further provides a geometric interpretation by separating the effects of delay information, angular information, and their coupling in the Cartesian position domain. This interpretation explains how localization accuracy is governed not only by received signal strength, but also by the geometric conditioning of the sensing observations. Based on this CRLB, we reveal how UAV deployment and target position reshape the localization information available at the BS. 

    \item
    We specialize the general CRLB expression to an OFDM-based ISAC waveform, thereby linking the theoretical localization bound with practical ISAC waveform parameters. Based on this specialization, numerical results are provided under different UAV positions and representative UAV movement patterns. The results verify that the derived CRLB correctly predicts the localization-performance variation caused by UAV geometry. More importantly, the validation illustrates that even with the same waveform and resource allocation, different UAV deployment locations can lead to substantially different localization accuracy due to the geometry-dependent Fisher information structure.

    \item
    We analyze the sensing coverage area based on the CRLB criterion to characterize the spatial region where a UAV-assisted ISAC system can support under a prescribed CRLB threshold. The analysis reveals how reliable localization regions are formed around the BS and the UAV, and explains their contour deformation through local closed-form boundary approximations. We further characterize the coverage area in two-dimensional sensing and the coverage volume in 3D sensing, including both total coverage and UAV-side coverage. The results show that the coverage size can vary non-monotonically with UAV displacement and can be significantly affected by the CRLB threshold and UAV altitude, providing insights for flexible UAV anchor node deployment in UAV-assisted ISAC systems.
\end{itemize}

The remainder of this paper is organized as follows. Section~\ref{sec:system_model} presents the UAV-assisted bistatic ISAC system model. 
Section~\ref{sec:crlb_3d} derives the closed-form 3D localization CRLB, provides its geometric decomposition, and discusses the two-dimensional special case. Section~\ref{sec:sensing_coverage} analyzes the CRLB-based sensing coverage and evaluates the deployment-dependent coverage size. 
Section~\ref{sec:verification} validates the closed-form CRLB results using OFDM waveform with maximum-likelihood (ML) estimation and provides numerical results for the sensing coverage analysis. 
Finally, Section~\ref{sec:conclusion} concludes this paper.

\emph{Notations:}
Scalars, vectors, matrices, and sets are denoted by italic, boldface lower-case, boldface upper-case, and calligraphic letters, respectively. For a vector $\mathbf x$, $\|\mathbf x\|$ denotes its Euclidean norm. For a matrix $\mathbf A$, $\mathbf A^{\rm T}$, $\mathbf A^{\rm H}$, $\mathbf A^{-1}$, and ${\rm Tr}(\mathbf A)$ denote its transpose, Hermitian transpose, inverse, and trace, respectively. The symbols $\mathbf I_N$, $\mathbb E\{\cdot\}$, ${\rm Re}\{\cdot\}$, $|\cdot|$, $\otimes$, and $j$ denote the $N$-dimensional identity matrix, expectation, real part, absolute value, Kronecker product, and imaginary unit, respectively.

\section{System Model}\label{sec:system_model}
\begin{figure}[htbp] 
        \centering \includegraphics[width=0.9\columnwidth]{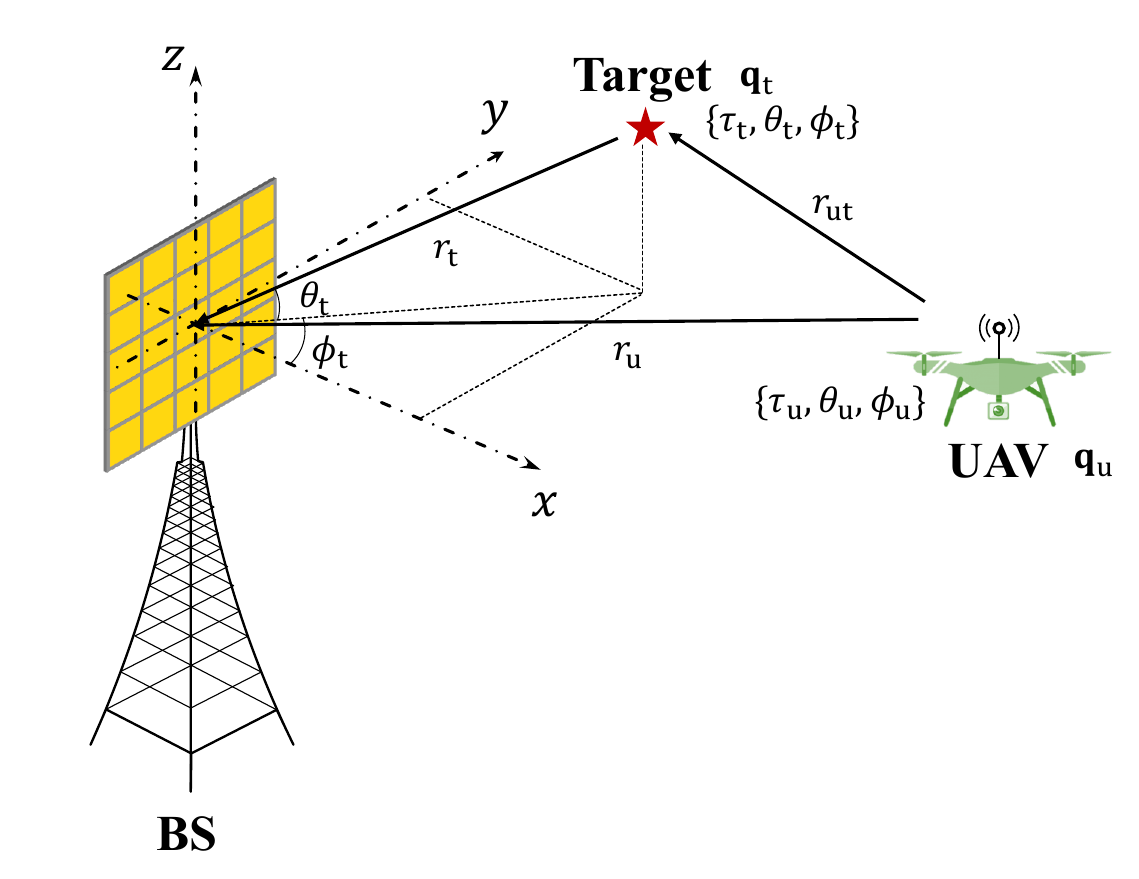}
        \caption{\label{fig:env} An illustration of UAV-assisted uplink bistatic ISAC system.}
\end{figure}
As shown in Fig.~\ref{fig:env}, we consider a UAV-assisted bistatic ISAC system consisting of a BS, a UAV, and a target to be localized. The BS is located at the origin of a 3D Cartesian coordinate system and is equipped with a UPA deployed on the $yz$-plane. The UPA contains $N_y$ and $N_z$ antenna elements along the $y$- and $z$-axes, respectively, and the total number of BS antennas is $N_{\rm R}=N_yN_z$. The inter-element spacing is half wavelength, i.e., $d=\lambda/2$, where $\lambda=c/f_c$, $f_c$ is the carrier frequency, and $c$ is the speed of light. The target and UAV locations are denoted by $\mathbf q_{\rm t}=[x_{\rm t},y_{\rm t},z_{\rm t}]^{\rm T}$ and $\mathbf q_{\rm u}=[x_{\rm u},y_{\rm u},z_{\rm u}]^{\rm T}$, respectively. Their spherical-coordinate representations are given by
\begin{equation}\label{eq:spherical_coordinate}
    \mathbf q_i
    =
    r_i\big[\cos\theta_i\cos\phi_i,\cos\theta_i\sin\phi_i,\sin\theta_i \big]^{\rm T}
    \quad i\in\{{\rm u},{\rm t}\},
\end{equation}
where $r_i=\|\mathbf q_i\|$ denotes the range, $\theta_i$ denotes the elevation angle, and $\phi_i$ denotes the azimuth angle. The UAV-target distance is denoted by $r_{\rm ut}=\|\mathbf q_{\rm u}-\mathbf q_{\rm t}\|$.
Accordingly, the direct UAV-BS delay and the bistatic sensing delay associated with the target are respectively given by $\tau_{\rm u}=r_{\rm u}/c$ and $\tau_{\rm t}=(r_{\rm t}+r_{\rm ut})/c$.

Since the UPA is deployed on the $yz$-plane, the phase progression across the array is determined by the $y$- and $z$-direction components of the incident wave vector. The UPA steering vector associated with direction $(\theta,\phi)$ is given by
\begin{equation}\label{eq:upa_steering}
    \mathbf a(\theta,\phi)
    =
    \mathbf a_y(\theta,\phi)\otimes\mathbf a_z(\theta),
\end{equation}
where
\begin{equation}\label{eq:ay_steering}
    \mathbf a_y(\theta,\phi)
    =
    \left[
    1,
    e^{-j\pi\cos\theta\sin\phi},
    \ldots,
    e^{-j\pi(N_y-1)\cos\theta\sin\phi}
    \right]^{\rm T},
\end{equation}
and
\begin{equation}\label{eq:az_steering}
    \mathbf a_z(\theta)
    =
    \left[
    1,
    e^{-j\pi\sin\theta},
    \ldots,
    e^{-j\pi(N_z-1)\sin\theta}
    \right]^{\rm T}.
\end{equation}
It follows that $\|\mathbf a(\theta,\phi)\|^2=N_{\rm R}$.

The UAV transmits a sensing-capable information-bearing waveform over an observation interval of duration $T_{\rm obs}$. Let $s(t)$ denote the normalized complex baseband waveform satisfying
\begin{equation}\label{eq:waveform_normalization}
    \frac{1}{T_{\rm obs}}
    \int_{\mathcal T_{\rm obs}}
    |s(t)|^2{\rm d}t
    =
    1,
    \quad
    \mathcal T_{\rm obs}\triangleq[0,T_{\rm obs}].
\end{equation}
The transmitted ISAC signal is
\begin{equation}\label{eq:tx_signal}
    x(t)=\sqrt{P_{\rm t}}s(t),
\end{equation}
where $P_{\rm t}$ denotes the average transmit power of the UAV.

Taking the Doppler effect into account, the received signal at the BS can be modeled as
\begin{equation}\label{eq:rx_signal}
\begin{aligned}
    \mathbf y(t)
    &=
    \sqrt{P_{\rm t}}\alpha_{\rm u}
    \mathbf a(\theta_{\rm u},\phi_{\rm u})
    s(t-\tau_{\rm u})
    e^{j2\pi\nu_{\rm u}t}
    \\
    &\quad+
    \sqrt{P_{\rm t}}\alpha_{\rm t}
    \mathbf a(\theta_{\rm t},\phi_{\rm t})
    s(t-\tau_{\rm t})
    e^{j2\pi\nu_{\rm t}t}
    +
    \mathbf n(t),
    \quad t\in\mathcal T_{\rm obs},
\end{aligned}
\end{equation}
where $\alpha_{\rm u}$ and $\alpha_{\rm t}$ denote the complex channel gains of the direct UAV-BS path and the target-reflected path, respectively, $\nu_{\rm u}$ and $\nu_{\rm t}$ denote the corresponding Doppler shifts, and $\mathbf n(t)$ is complex additive white Gaussian noise with complex-baseband power spectral density $N_0$. In the subsequent CRLB analysis, the focus is on the geometry-dependent angle-delay information for target localization, where the Doppler shift is assumed to be compensated. Under the Friis propagation model, the direct-path channel power gain is modeled as
\begin{equation}
    |\alpha_{\rm u}|^2
    =
    \zeta_0r_{\rm u}^{-2},
    \quad
    \zeta_0\triangleq
    \frac{c^2}{(4\pi)^2f_c^2}.
\end{equation}
For the target-reflected path, the bistatic radar equation gives
\begin{equation}
    |\alpha_{\rm t}|^2
    =
    \xi_0 r_{\rm ut}^{-2}r_{\rm t}^{-2},
    \quad
    \xi_0\triangleq
    \frac{c^2\kappa}{(4\pi)^3f_c^2},
\end{equation}
where $\kappa$ denotes the target radar cross section (RCS). 


\section{Closed-form CRLB for 3D Localization }\label{sec:crlb_3d}

In this section, we characterize how the UAV deployment location $\mathbf{q}_{\rm u}$ affects the 3D sensing accuracy for the target position $\mathbf{q}_{\rm t}$. To this end, we first derive the Fisher information associated with the angle-delay parameters and then transform it into the Cartesian position domain. The resulting CRLB is expressed as a closed-form function of the target location $\mathbf{q}_{\rm t}$ and the UAV location $\mathbf{q}_{\rm u}$. This closed-form expression enables us to identify the geometric factors that govern the 3D localization accuracy and to further evaluate CRLB-based sensing coverage.

\subsection{SNR and Fisher Information}
For the sensing functionality of the considered ISAC system, the BS estimates the target location from the target-reflected path. The direct UAV-BS component does not contain target-location information and is assumed to be suppressed or calibrated before sensing processing. After Doppler compensation, the target-reflected sensing signal can be written as
\begin{equation}\label{eq:sensing_signal}
    \mathbf y_{\rm s}(t)
    =
    \sqrt{P_{\rm t}}\alpha_{\rm t}
    \mathbf a(\theta_{\rm t},\phi_{\rm t})
    s(t-\tau_{\rm t})
    +
    \mathbf n(t),
    \quad t\in\mathcal T_{\rm obs}.
\end{equation}
Since the target parameters are estimated from the observations collected within $T_{\rm obs}$, the accumulated sensing observation SNR is defined as \cite{van1968detection}
\begin{equation} 
\begin{aligned} 
\Upsilon_{\rm s} &= \frac{|\alpha_{\rm t}|^2 P_{\rm t}T_{\rm obs} \|\mathbf a(\theta_{\rm t},\phi_{\rm t})\|^2}{N_0} \\ &= \frac{P_{\rm t}T_{\rm obs}N_{\rm R}\xi_0} {N_0 r_{\rm ut}^2r_{\rm t}^2} = \xi_1 r_{\rm ut}^{-2}r_{\rm t}^{-2}, 
\end{aligned} \label{eq:sensing_snr} 
\end{equation} 
where $\xi_1\triangleq P_{\rm t}T_{\rm obs}N_{\rm R}\xi_0/N_0$ is the reference SNR.

We first consider the Fisher information associated with the angle-delay parameter vector
\begin{equation}\label{eq:theta_vector}
    \boldsymbol\Theta
    =
    [\theta_{\rm t},\phi_{\rm t},\tau_{\rm t}]^{\rm T}.
\end{equation}
The equivalent Fisher information for the bistatic delay $\tau_{\rm t}$ is \cite{liu_cst_limits_2022}
\begin{equation}\label{eq:delay_fim}
    I_{\tau_{\rm t}}
    =
    8\pi^2\beta_{\rm rms}^2\Upsilon_{\rm s}
    =
    \frac{\xi_\tau}{r_{\rm ut}^2r_{\rm t}^2},
\end{equation}
where 
\begin{equation}\label{eq:rms_bandwidth}
    \beta_{\rm rms}^2
    =
    \frac{\int (f-\bar f)^2|S(f)|^2{\rm d}f}
    {\int |S(f)|^2{\rm d}f},
    \quad
    \bar f
    =
    \frac{\int f|S(f)|^2{\rm d}f}
    {\int |S(f)|^2{\rm d}f},
\end{equation}
$S(f)$ is the Fourier transform of $s(t)$, and $\xi_\tau\triangleq8\pi^2\beta_{\rm rms}^2\xi_1$.


\begin{lemma}\label{lem:angular_fim}
For the UPA deployed on the $yz$-plane, the effective angular Fisher information matrix for $(\theta_{\rm t},\phi_{\rm t})$ is
\begin{equation}\label{eq:angular_fim}
    \mathbf I_{\rm ang}
    =
    \begin{bmatrix}
        I_{\theta_{\rm t}\theta_{\rm t}} & I_{\theta_{\rm t}\phi_{\rm t}}\\
        I_{\theta_{\rm t}\phi_{\rm t}} & I_{\phi_{\rm t}\phi_{\rm t}}
    \end{bmatrix},
\end{equation}
where
\begin{equation}\label{eq:I_theta_theta}
    I_{\theta_{\rm t}\theta_{\rm t}}
    =
    \frac{
    \xi_y\sin^2\theta_{\rm t}\sin^2\phi_{\rm t}
    +
    \xi_z\cos^2\theta_{\rm t}
    }
    {r_{\rm ut}^2r_{\rm t}^2},
\end{equation}
\begin{equation}\label{eq:I_phi_phi}
    I_{\phi_{\rm t}\phi_{\rm t}}
    =
    \frac{
    \xi_y\cos^2\theta_{\rm t}\cos^2\phi_{\rm t}
    }
    {r_{\rm ut}^2r_{\rm t}^2},
\end{equation}
and
\begin{equation}\label{eq:I_theta_phi}
    I_{\theta_{\rm t}\phi_{\rm t}}
    =
    -
    \frac{
    \xi_y\sin\theta_{\rm t}\cos\theta_{\rm t}
    \sin\phi_{\rm t}\cos\phi_{\rm t}
    }
    {r_{\rm ut}^2r_{\rm t}^2},
\end{equation}
with
\begin{equation}\label{eq:xi_y_z}
    \xi_y\triangleq
    \frac{\pi^2(N_y^2-1)\xi_1}{6},
    \quad
    \xi_z\triangleq
    \frac{\pi^2(N_z^2-1)\xi_1}{6}.
\end{equation}
\end{lemma}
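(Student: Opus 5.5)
We plan to obtain $\mathbf I_{\rm ang}$ from the standard Fisher information formula for a deterministic signal in complex white Gaussian noise,
\begin{equation*}
[\mathbf I]_{ij}=\frac{2}{N_0}{\rm Re}\int_{\mathcal T_{\rm obs}}\Big(\frac{\partial\boldsymbol\mu(t)}{\partial\vartheta_i}\Big)^{\rm H}\frac{\partial\boldsymbol\mu(t)}{\partial\vartheta_j}\,{\rm d}t ,
\end{equation*}
applied to the mean $\boldsymbol\mu(t)=\sqrt{P_{\rm t}}\alpha_{\rm t}\mathbf a(\theta_{\rm t},\phi_{\rm t})s(t-\tau_{\rm t})$ in \eqref{eq:sensing_signal}. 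The nuisance parameter $\alpha_{\rm t}$ (amplitude and phase) must be eliminated, and the delay coupling must be handled. The word \emph{effective} in the statement indicates the Schur complement after removing these nuisance parameters. Since the time integral factors as $\int|s|^2=T_{\rm obs}$, every angular entry takes the form $\frac{2|\alpha_{\rm t}|^2P_{\rm t}T_{\rm obs}}{N_0}{\rm Re}\{\dot{\mathbf a}_i^{\rm H}\dot{\mathbf a}_j\}$, with the nuisance correction subtracted.

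The key simplification is to reparametrize the steering vector by its spatial frequencies $u=\cos\theta\sin\phi$ (along $y$) and $v=\sin\theta$ (along $z$). Then $\mathbf a=\mathbf a_y(u)\otimes\mathbf a_z(v)$, with phases $-\pi(n_yu+n_zv)$. We would first compute the effective FIM for $(u,v)$. The phase of $\alpha_{\rm t}$ couples to each derivative through the mean index $\bar n_y=(N_y-1)/2$. Projecting out this common-phase direction, which is the Schur complement with respect to ${\rm arg}\,\alpha_{\rm t}$, replaces $\sum n^2$ by the index variance. This gives
\begin{equation*}
\sum_{n=0}^{N-1}(n-\bar n)^2=\frac{N(N^2-1)}{12}.
\end{equation*}
Because the index variables are separable, the cross term in $(u,v)$ vanishes after centering. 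The amplitude $|\alpha_{\rm t}|$ is real-orthogonal to the phase derivatives, so it decouples. The delay derivative involves $\dot s$, while the angle derivatives involve $s$. With the usual convention $\int s^*\dot s=0$, or after centering the frequency at $\bar f$ as in \eqref{eq:rms_bandwidth}, the angle–delay cross terms vanish. The angular block then becomes
\begin{equation*}
\frac{2|\alpha_{\rm t}|^2P_{\rm t}T_{\rm obs}}{N_0}\pi^2N_yN_z\,{\rm diag}\!\left(\frac{N_y^2-1}{12},\frac{N_z^2-1}{12}\right)={\rm diag}(\xi_y,\xi_z)\,r_{\rm ut}^{-2}r_{\rm t}^{-2},
\end{equation*}
using $\Upsilon_{\rm s}=\xi_1r_{\rm ut}^{-2}r_{\rm t}^{-2}$ from \eqref{eq:sensing_snr}. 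This matches \eqref{eq:xi_y_z}.

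Finally we apply the chain rule $\mathbf I_{\rm ang}=\mathbf J^{\rm T}{\rm diag}(\xi_y,\xi_z)\mathbf J/(r_{\rm ut}^2r_{\rm t}^2)$, where $\mathbf J=\partial(u,v)/\partial(\theta,\phi)$:
\begin{equation*}
\mathbf J=\begin{bmatrix}-\sin\theta\sin\phi & \cos\theta\cos\phi\\ \cos\theta & 0\end{bmatrix}.
\end{equation*}
Expanding gives $I_{\theta\theta}=\xi_y\sin^2\theta\sin^2\phi+\xi_z\cos^2\theta$, $I_{\phi\phi}=\xi_y\cos^2\theta\cos^2\phi$, and $I_{\theta\phi}=-\xi_y\sin\theta\cos\theta\sin\phi\cos\phi$, each divided by $r_{\rm ut}^2r_{\rm t}^2$. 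These are exactly \eqref{eq:I_theta_theta}–\eqref{eq:I_theta_phi}.

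The main obstacle is the nuisance-parameter step. We must justify that eliminating the unknown complex gain yields precisely the centered variance $N(N^2-1)/12$. We must also confirm that the residual angle–delay coupling is zero, so that the stated matrix is both the effective angular FIM and block-diagonal with the delay FIM in \eqref{eq:delay_fim}. We would handle this by writing the full $5\times5$ FIM over $(u,v,\tau,|\alpha|,\arg\alpha)$ and computing the Schur complement explicitly. The centering identity $\sum(n-\bar n)=0$ is what kills the cross terms.
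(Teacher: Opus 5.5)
Your proposal is correct and is essentially the paper's argument: the coefficients $u_\eta$, $v_\eta$ in Appendix~\ref{app:angle_information} are exactly the entries of your Jacobian $\partial(u,v)/\partial(\theta_{\rm t},\phi_{\rm t})$. Likewise, the paper's projection $\mathbf P_{\mathbf a}^{\perp}$ plays the role of your Schur complement over the unknown gain, and both reduce $\sum n^2$ to the centered variance $N(N^2-1)/12$. Your explicit check that the angle--delay cross terms vanish after eliminating the gain phase is a useful addition, since the paper simply asserts the block-diagonal structure in \eqref{eq:theta_fim}.
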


\begin{IEEEproof}
Please refer to Appendix~\ref{app:angle_information}.
\end{IEEEproof}

Thus, the Fisher information matrix for $\boldsymbol\Theta=[\theta_{\rm t},\phi_{\rm t},\tau_{\rm t}]^{\rm T}$ is
\begin{equation}\label{eq:theta_fim}
    \mathbf I(\boldsymbol\Theta)
    =
    \begin{bmatrix}
        I_{\theta_{\rm t}\theta_{\rm t}} & I_{\theta_{\rm t}\phi_{\rm t}} & 0\\
        I_{\theta_{\rm t}\phi_{\rm t}} & I_{\phi_{\rm t}\phi_{\rm t}} & 0\\
        0 & 0 & I_{\tau_{\rm t}}
    \end{bmatrix}.
\end{equation}

\subsection{CRLB for 3D Localization}\label{subsec:3dCRLB}
Since the sensing objective of the ISAC system is 3D target localization, the Fisher information should be transformed from the angle-delay domain to the Cartesian position domain. The transformation from the target location $\mathbf q_{\rm t}$ to the angle-delay parameters is
\begin{equation}\label{eq:parameter_transformation}
    \boldsymbol\Theta
    =
    \mathbf g(\mathbf q_{\rm t},\mathbf q_{\rm u})
    =
    \begin{bmatrix}
        \arcsin(z_{\rm t}/r_{\rm t})\\
        \arctan(y_{\rm t}/x_{\rm t})\\
        (r_{\rm t}+r_{\rm ut})/c
    \end{bmatrix}.
\end{equation}
Define
\begin{equation}\label{eq:local_basis_1}
    \mathbf e_r
    =
    \begin{bmatrix}
        \cos\theta_{\rm t}\cos\phi_{\rm t}\\
        \cos\theta_{\rm t}\sin\phi_{\rm t}\\
        \sin\theta_{\rm t}
    \end{bmatrix},
    \quad
    \mathbf e_\theta
    =
    \begin{bmatrix}
        -\sin\theta_{\rm t}\cos\phi_{\rm t}\\
        -\sin\theta_{\rm t}\sin\phi_{\rm t}\\
        \cos\theta_{\rm t}
    \end{bmatrix},
\end{equation}
and
\begin{equation}\label{eq:local_basis_2}
    \mathbf e_\phi
    =
    \begin{bmatrix}
        -\sin\phi_{\rm t}\\
        \cos\phi_{\rm t}\\
        0
    \end{bmatrix},
    \quad
    \mathbf e_{\rm ut}
    =
    \frac{\mathbf q_{\rm t}-\mathbf q_{\rm u}}{r_{\rm ut}},
\end{equation}
where
\begin{equation}
\begin{aligned}
    r_{\rm ut}=\Big(
&r_{\rm u}^{2}+r_{\rm t}^{2}
-2r_{\rm u}r_{\rm t}
\big[
\sin\theta_{\rm u}\sin\theta_{\rm t} \\
&+
\cos\theta_{\rm u}\cos\theta_{\rm t}
\cos(\phi_{\rm u}-\phi_{\rm t})
\big]
\Big)^{\frac{1}{2}}.
\end{aligned}
\end{equation}
The corresponding Jacobian matrix is
\begin{equation}\label{eq:jacobian_qt}
    \mathbf J_{\mathbf q_{\rm t}}
    =
    \frac{\partial\boldsymbol\Theta}{\partial\mathbf q_{\rm t}}
    =
    \begin{bmatrix}
        r_{\rm t}^{-1}\mathbf e_\theta^{\rm T}\\
        (r_{\rm t}\cos\theta_{\rm t})^{-1}\mathbf e_\phi^{\rm T}\\
        c^{-1}(\mathbf e_r+\mathbf e_{\rm ut})^{\rm T}
    \end{bmatrix}.
\end{equation}
Therefore, the equivalent Fisher information matrix for target localization is
\begin{equation}\label{eq:position_fim}
    \mathbf F(\mathbf q_{\rm t},\mathbf q_{\rm u})
    =
    \mathbf J_{\mathbf q_{\rm t}}^{\rm T}
    \mathbf I(\boldsymbol\Theta)
    \mathbf J_{\mathbf q_{\rm t}}.
\end{equation}
The position-domain CRLB is defined as
\begin{equation}\label{eq:position_crlb_def}
    \mathcal C(\mathbf q_{\rm t},\mathbf q_{\rm u})
    =
    {\rm Tr}
    \left(
    \mathbf F^{-1}(\mathbf q_{\rm t},\mathbf q_{\rm u})
    \right),
\end{equation}
which lower-bounds the mean-squared localization error as
\begin{equation}\label{eq:mse_crlb}
    \mathbb E
    \left[
    \|\hat{\mathbf q}_{\rm t}-\mathbf q_{\rm t}\|^2
    \right]
    \ge
    \mathcal C(\mathbf q_{\rm t},\mathbf q_{\rm u}),
\end{equation}
where $\hat{\mathbf q}_{\rm t}$ is the estimated target location.

\begin{theorem}\label{the:3d_crlb}
For the considered 3D UAV-assisted bistatic ISAC system, the CRLB for target position estimation is given by
\begin{equation}\label{eq:3d_crlb}
\begin{aligned}
    &\mathcal C(\mathbf q_{\rm t},\mathbf q_{\rm u})
    =
    \frac{
    4c^2r_{\rm ut}^4r_{\rm t}^4
    }
    {
    \xi_\tau
    \left((r_{\rm t}+r_{\rm ut})^2-r_{\rm u}^2\right)^2
    }
    \\
    &\quad+
    \frac{
    r_{\rm ut}^2r_{\rm t}^4
    }
    {
    \xi_y\cos^2\phi_{\rm t}
    }
    \left(
    1+
    \frac{
    4r_{\rm u}^2r_{\rm t}^2
    \cos^2\theta_{\rm u}
    \sin^2(\phi_{\rm u}-\phi_{\rm t})
    }
    {
    \left((r_{\rm t}+r_{\rm ut})^2-r_{\rm u}^2\right)^2
    }
    \right)
    \\
    &\quad+
    \frac{
    r_{\rm ut}^2r_{\rm t}^4
    }
    {
    \xi_z\cos^2\theta_{\rm t}\cos^2\phi_{\rm t}
    }
    \Bigg(
    1-\cos^2\theta_{\rm t}\sin^2\phi_{\rm t}
    \\
    &\qquad\
    +
    \frac{
    4r_{\rm u}^2r_{\rm t}^2
    (
    \cos\theta_{\rm u}\sin\theta_{\rm t}\cos\phi_{\rm u}
    -
    \sin\theta_{\rm u}\cos\theta_{\rm t}\cos\phi_{\rm t}
    )^2
    }
    {
    \left((r_{\rm t}+r_{\rm ut})^2-r_{\rm u}^2\right)^2
    }
    \Bigg).
\end{aligned}
\end{equation}
\end{theorem}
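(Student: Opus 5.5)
The plan is to compute $\mathbf F^{-1}=\mathbf J_{\mathbf q_{\rm t}}^{-1}\mathbf I(\boldsymbol\Theta)^{-1}\mathbf J_{\mathbf q_{\rm t}}^{-\rm T}$ directly. Since $\mathbf J_{\mathbf q_{\rm t}}$ in (\ref{eq:jacobian_qt}) is square and, away from degenerate geometries, invertible, we get
\[
\mathcal C={\rm Tr}\big(\mathbf I(\boldsymbol\Theta)^{-1}\mathbf J_{\mathbf q_{\rm t}}^{-\rm T}\mathbf J_{\mathbf q_{\rm t}}^{-1}\big).
\]
This reduces the problem to two simpler tasks: inverting the block-diagonal matrix (\ref{eq:theta_fim}), and finding the Gram matrix of the columns of $\mathbf J_{\mathbf q_{\rm t}}^{-1}$.

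\textbf{Step 1: invert $\mathbf J_{\mathbf q_{\rm t}}$.} I would write it in the orthonormal basis $(\mathbf e_r,\mathbf e_\theta,\mathbf e_\phi)$. Decompose $\mathbf e_{\rm ut}=a\mathbf e_r+b\mathbf e_\theta+d\mathbf e_\phi$. The rows of $\mathbf J_{\mathbf q_{\rm t}}$ are then
\[
r_{\rm t}^{-1}\mathbf e_\theta,\qquad (r_{\rm t}\cos\theta_{\rm t})^{-1}\mathbf e_\phi,\qquad c^{-1}\big((1+a)\mathbf e_r+b\mathbf e_\theta+d\mathbf e_\phi\big).
\]
The columns of $\mathbf J^{-1}$ are the dual vectors $\mathbf v_1,\mathbf v_2,\mathbf v_3$. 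A direct check gives
\[
\mathbf v_3=\frac{c}{1+a}\,\mathbf e_r,\qquad
\mathbf v_1=r_{\rm t}\Big(\mathbf e_\theta-\frac{b}{1+a}\mathbf e_r\Big),\qquad
\mathbf v_2=r_{\rm t}\cos\theta_{\rm t}\Big(\mathbf e_\phi-\frac{d}{1+a}\mathbf e_r\Big).
\]
The delay entry is then $\|\mathbf v_3\|^2/I_{\tau_{\rm t}}$. To match (\ref{eq:3d_crlb}), I would use the identity $1+a=1+\mathbf e_r^{\rm T}\mathbf e_{\rm ut}$. Together with $r_{\rm u}^2=\|\mathbf q_{\rm t}-r_{\rm ut}\mathbf e_{\rm ut}\|^2$, this yields
\[
(r_{\rm t}+r_{\rm ut})^2-r_{\rm u}^2=2r_{\rm t}r_{\rm ut}(1+a).
\]
Hence $c^2/(1+a)^2=4c^2r_{\rm t}^2r_{\rm ut}^2/((r_{\rm t}+r_{\rm ut})^2-r_{\rm u}^2)^2$. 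Multiplying by $r_{\rm ut}^2r_{\rm t}^2/\xi_\tau$ gives the first term of (\ref{eq:3d_crlb}).

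\textbf{Step 2: angular part.} The angular contribution is ${\rm Tr}(\mathbf I_{\rm ang}^{-1}\mathbf G)$, where $\mathbf G$ is the $2\times2$ Gram matrix of $\mathbf v_1,\mathbf v_2$:
\[
G_{11}=r_{\rm t}^2\Big(1+\frac{b^2}{(1+a)^2}\Big),\quad
G_{22}=r_{\rm t}^2\cos^2\theta_{\rm t}\Big(1+\frac{d^2}{(1+a)^2}\Big),\quad
G_{12}=\frac{r_{\rm t}^2\cos\theta_{\rm t}\,bd}{(1+a)^2}.
\]
From Lemma~\ref{lem:angular_fim}, $\det\mathbf I_{\rm ang}=\xi_y\xi_z\cos^4\theta_{\rm t}\cos^2\phi_{\rm t}/(r_{\rm ut}^4r_{\rm t}^4)$. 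The inverse $\mathbf I_{\rm ang}^{-1}$ is explicit, and I would group the resulting trace by the coefficients $1/\xi_y$ and $1/\xi_z$.

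For the components, $b=\mathbf e_\theta^{\rm T}\mathbf e_{\rm ut}=-\mathbf e_\theta^{\rm T}\mathbf q_{\rm u}/r_{\rm ut}$, since $\mathbf e_\theta\perp\mathbf q_{\rm t}$. Likewise $d=-\mathbf e_\phi^{\rm T}\mathbf q_{\rm u}/r_{\rm ut}$. Using (\ref{eq:spherical_coordinate}),
\[
d=r_{\rm u}\cos\theta_{\rm u}\sin(\phi_{\rm u}-\phi_{\rm t})/r_{\rm ut}\ \text{(up to sign)},\qquad
b=-r_{\rm u}\big[\sin\theta_{\rm u}\cos\theta_{\rm t}-\cos\theta_{\rm u}\sin\theta_{\rm t}\cos(\phi_{\rm u}-\phi_{\rm t})\big]/r_{\rm ut}.
\]
Combined with the Step 1 identity, $d/(1+a)$ and $b/(1+a)$ become $2r_{\rm u}r_{\rm t}(\cdot)/((r_{\rm t}+r_{\rm ut})^2-r_{\rm u}^2)$. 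These are exactly the correction terms appearing in (\ref{eq:3d_crlb}).

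\textbf{Main obstacle.} The main difficulty is the trigonometric consolidation of the $1/\xi_z$ term. After expanding ${\rm Tr}(\mathbf I_{\rm ang}^{-1}\mathbf G)$, that coefficient mixes $b$ and $d$ weighted by $\sin\theta_{\rm t}\sin\phi_{\rm t}$ and $\cos\phi_{\rm t}$. It must collapse to the single square $(\cos\theta_{\rm u}\sin\theta_{\rm t}\cos\phi_{\rm u}-\sin\theta_{\rm u}\cos\theta_{\rm t}\cos\phi_{\rm t})^2$, with the leading factor $1-\cos^2\theta_{\rm t}\sin^2\phi_{\rm t}$.

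I would attack this by working in Cartesian terms. The $\xi_z$ channel measures only $\sin\theta=z/r$, and the $\xi_y$ channel measures only $\cos\theta\sin\phi=y/r$. So I would reparametrize the angular observations as $u_y=y_{\rm t}/r_{\rm t}$ and $u_z=z_{\rm t}/r_{\rm t}$, which makes their FIM diagonal, ${\rm diag}(\xi_y,\xi_z)/(r_{\rm ut}^2r_{\rm t}^2)$. Redoing Step 1 with rows $\nabla u_y$ and $\nabla u_z$ then gives each coefficient as a squared norm of one dual vector. The $\xi_z$ dual vector is orthogonal to $\nabla u_y$ and to $\mathbf e_r+\mathbf e_{\rm ut}$. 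Its norm squared is obtained via a cross product, which produces precisely the stated combination involving $\cos\phi_{\rm u}$ and $\cos\phi_{\rm t}$. I would verify the final expression numerically at a random geometry as a sanity check.
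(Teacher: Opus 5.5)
Your proposal is correct, and its skeleton matches the paper's. In Appendix~B the paper eliminates ${\rm d}r_{\rm t}$ from ${\rm d}\tau_{\rm t}$, which is exactly your dual-vector computation, and \eqref{eq:crlb_proof_compact} is your delay term plus ${\rm Tr}(\mathbf I_{\rm ang}^{-1}\mathbf G)$. The paper's $a,b,h$ are your $b,d,1+a$.

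The routes differ in how the $1/\xi_z$ coefficient is consolidated. The paper stays in $(\theta_{\rm t},\phi_{\rm t})$. Up to the factor $r_{\rm ut}^{-2}r_{\rm t}^{-2}$, the numerator $Da^2-2Bab\cos\theta_{\rm t}+Ab^2\cos^2\theta_{\rm t}$ (paper notation) equals $\xi_y\cos^2\theta_{\rm t}(a\cos\phi_{\rm t}+b\sin\theta_{\rm t}\sin\phi_{\rm t})^2+\xi_z\cos^4\theta_{\rm t}\,b^2$. After dividing by $\Delta\propto\xi_y\xi_z$, the $1/\xi_z$ part is therefore already a perfect square. The identity $\cos\phi_{\rm t}\cos(\phi_{\rm u}-\phi_{\rm t})-\sin\phi_{\rm t}\sin(\phi_{\rm u}-\phi_{\rm t})=\cos\phi_{\rm u}$ then turns it into the stated bracket. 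So the obstacle you anticipate is a one-line completion of the square.

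Your alternative also works and is more illuminating. You reparametrize by the direction cosines $u_y=\ell_y$, $u_z=\ell_z$; by the chain rule, Lemma~\ref{lem:angular_fim} becomes ${\rm diag}(\xi_y,\xi_z)/(r_{\rm ut}^2r_{\rm t}^2)$. You also rely on the position CRLB being invariant under reparametrization of the intermediate parameters. To finish it you need two facts you did not write down:
\begin{itemize}
\item $\nabla\ell_y\times\nabla\ell_z=\ell_x\mathbf e_r/r_{\rm t}^2$, so the normalizing determinant is $h\ell_x/r_{\rm t}^2$.
\item Splitting $\mathbf e_r+\mathbf e_{\rm ut}$ into its $\mathbf e_r$-component $h$ and an orthogonal remainder gives $\|(\mathbf e_y-\ell_y\mathbf e_r)\times(\mathbf e_r+\mathbf e_{\rm ut})\|^2=(1-\ell_y^2)h^2+\big(\mathbf e_y^{\rm T}(\mathbf q_{\rm t}\times\mathbf q_{\rm u})\big)^2/(r_{\rm ut}r_{\rm t})^2$.
\end{itemize}

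This yields the geometric form \eqref{eq:crlb_geo_form} directly. It explains why the $\xi_y$ and $\xi_z$ channels decouple and why the projected areas $S_{xy},S_{xz}$ appear, namely as triple products. It also shows the singularities of Remark~\ref{rem:geo_singularities} as the vanishing of the single determinant $h\ell_x$. The paper's route, by contrast, uses Lemma~\ref{lem:angular_fim} exactly as stated with no change of variables, at the cost of less transparent trigonometry.
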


\begin{IEEEproof}
Please refer to Appendix~\ref{app:3d_crlb}.
\end{IEEEproof}

\begin{figure}[t] 
        \centering \includegraphics[width=0.9\columnwidth]{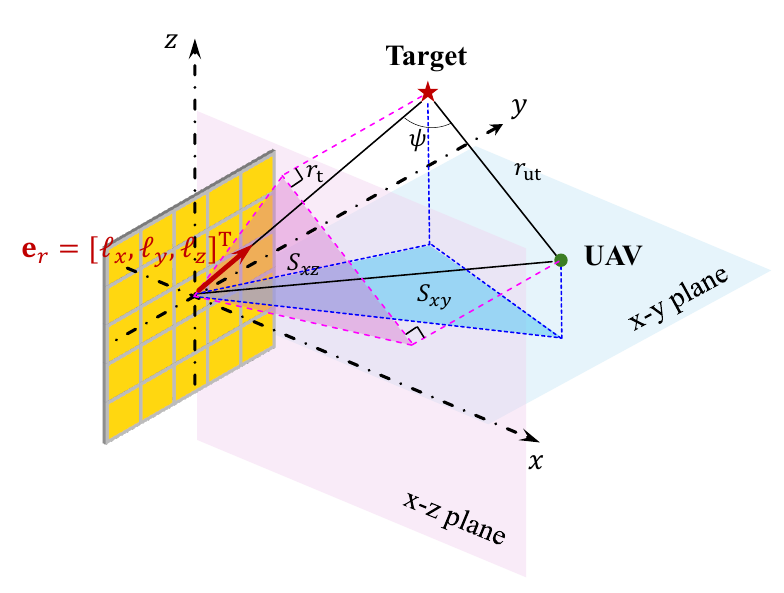}
        \caption{\label{fig:Geometric_Interpretation} An illustration of geometric interpretation of the CRLB.}
\end{figure}

To reveal the geometric meaning of the CRLB in Theorem~\ref{the:3d_crlb}, let $\ell_x$, $\ell_y$, and $\ell_z$ denote the projection of the BS-target unit vector onto the three axes, i.e.,
\begin{equation}\label{eq:direction_cosines}
    [\ell_x,\ell_y,\ell_z]^{\rm T}
    \triangleq
    \mathbf e_r
    =
    \frac{\mathbf q_{\rm t}}{r_{\rm t}}.
\end{equation}
According to \eqref{eq:spherical_coordinate}, they are explicitly given by
\begin{equation}\label{eq:direction_cosines_spherical}
    \ell_x=\cos\theta_{\rm t}\cos\phi_{\rm t},
    \quad
    \ell_y=\cos\theta_{\rm t}\sin\phi_{\rm t},
    \quad
    \ell_z=\sin\theta_{\rm t}.
\end{equation}
Since the BS UPA is deployed on the $yz$-plane, $\ell_x$ is the direction cosine along the normal vector of the UPA. Furthermore, let $\psi$ denote the angle between the BS-target direction $\mathbf e_r=\mathbf q_{\rm t}/r_{\rm t}$ and the UAV-target direction $\mathbf e_{\rm ut}=(\mathbf q_{\rm t}-\mathbf q_{\rm u})/r_{\rm ut}$, and define
\begin{equation}\label{eq:h_definition}
    h
    \triangleq
    1+\cos\psi
    =
    1+\mathbf e_r^{\rm T}\mathbf e_{\rm ut}
    =
    \frac{(r_{\rm t}+r_{\rm ut})^2-r_{\rm u}^2}
    {2r_{\rm t}r_{\rm ut}}.
\end{equation}
Let $S_{xy}$ and $S_{xz}$ denote the areas of the triangles formed by the BS, the target $\mathbf{q}_{\rm t}$, and the UAV $\mathbf{q}_{\rm u}$ after projecting onto the $xy$- and $xz$-planes, respectively, i.e.,
\begin{equation}\label{eq:projected_area_def}
    S_{xy}
    \triangleq
    \frac{1}{2}
    \left|
    \mathbf e_z^{\rm T}
    (\mathbf q_{\rm t}\times\mathbf q_{\rm u})
    \right|,
    \quad
    S_{xz}
    \triangleq
    \frac{1}{2}
    \left|
    \mathbf e_y^{\rm T}
    (\mathbf q_{\rm t}\times\mathbf q_{\rm u})
    \right|,
\end{equation}
where $\mathbf e_y=[0,1,0]^{\rm T}$ and $\mathbf e_z=[0,0,1]^{\rm T}$. From the spherical-coordinate representation, we have
\begin{equation}\label{eq:Sxy}
    4S_{xy}^2
    =
    r_{\rm u}^2r_{\rm t}^2
    \cos^2\theta_{\rm u}\cos^2\theta_{\rm t}
    \sin^2(\phi_{\rm u}-\phi_{\rm t}),
\end{equation}
and
\begin{equation}\label{eq:Sxz}
\begin{aligned}
    4S_{xz}^2
    &=
    r_{\rm u}^2r_{\rm t}^2
    (
    \cos\theta_{\rm u}\sin\theta_{\rm t}\cos\phi_{\rm u}
    \\
    &\quad
    -
    \sin\theta_{\rm u}\cos\theta_{\rm t}\cos\phi_{\rm t}
    )^2.
\end{aligned}
\end{equation}
Using the above direction-cosine and projected-area definitions together with \eqref{eq:h_definition}, the position-domain CRLB in \eqref{eq:3d_crlb} can be equivalently written as
\begin{equation}\label{eq:crlb_geo_form}
\begin{aligned}
    \mathcal C(\mathbf q_{\rm t},\mathbf q_{\rm u})
    &=
    \underbrace{
    \frac{
    c^2r_{\rm ut}^2r_{\rm t}^2
    }
    {
    \xi_\tau h^2
    }
    }_{\mathcal C_\tau}
    +
    \underbrace{
    \frac{
    r_{\rm ut}^2r_{\rm t}^4
    }
    {\ell_x^2}
    \left(
    \frac{1-\ell_z^2}{\xi_y}
    +
    \frac{1-\ell_y^2}{\xi_z}
    \right)
    }_{\mathcal C_{\rm ang}}
    \\
    &\quad+
    \underbrace{
    \frac{
    4r_{\rm t}^2
    }
    {
    h^2\ell_x^2
    }
    \left(
    \frac{S_{xy}^2}{\xi_y}
    +
    \frac{S_{xz}^2}{\xi_z}
    \right)
    }_{\mathcal C_{\rm cpl}}.
\end{aligned}
\end{equation}
In \eqref{eq:crlb_geo_form}, $\mathcal C_\tau$ represents the contribution associated with bistatic delay information, where $h$ characterizes the delay observability determined by the bistatic geometry. The term $\mathcal C_{\rm ang}$ represents the basic contribution associated with the two-dimensional angular information provided by the UPA, where $\ell_x$ characterizes the target direction relative to the array normal. The term $\mathcal C_{\rm cpl}$ represents the geometric coupling between the bistatic delay and angular information, with $S_{xy}$ and $S_{xz}$ measuring the projected bistatic geometry along the two array dimensions. The parameters $\xi_\tau$, $\xi_y$, and $\xi_z$ characterize the effective delay information and the angular information provided by the $y$- and $z$-direction array apertures, respectively.

To gain some insights and further expose the physical meaning of \eqref{eq:crlb_geo_form}, define the dimensionless root mean square (RMS) bandwidth and the dimensionless projected geometric factors as
\begin{equation}
    \bar{\beta}\triangleq\frac{\beta_{\rm rms}}{f_c},
    \qquad
    \bar d_{xy}\triangleq\frac{2S_{xy}}{r_{\rm ut}r_{\rm t}},
    \qquad
    \bar d_{xz}\triangleq\frac{2S_{xz}}{r_{\rm ut}r_{\rm t}}.
    \label{eq:normalized_bandwidth_projected_lengths}
\end{equation}
Here, $\bar{\beta}$ is the normalized RMS bandwidth and is typically much smaller than one in wireless systems, while $\bar d_{xy}$ and $\bar d_{xz}$ are normalized projected geometric factors satisfying $0\leq\bar d_{xy},\bar d_{xz}\leq1$. The CRLB in \eqref{eq:crlb_geo_form} can then be rewritten as
\begin{equation}
\label{eq:crlb_geo_physical_form}
\begin{aligned}
\mathcal C(\mathbf q_{\rm t},\mathbf q_{\rm u})
=
\frac{1}{\Upsilon_{\rm s}}
\Bigg[
&
\frac{\lambda^2}{8\pi^2\bar{\beta}^2h^2}
+
\frac{6r_{\rm t}^2}{\pi^2\ell_x^2}
\left(
\frac{1-\ell_z^2}{N_y^2-1}
+
\frac{1-\ell_y^2}{N_z^2-1}
\right)
\\
&+
\frac{6r_{\rm t}^2}{\pi^2h^2\ell_x^2}
\left(
\frac{\bar d_{xy}^2}{N_y^2-1}
+
\frac{\bar d_{xz}^2}{N_z^2-1}
\right)
\Bigg].
\end{aligned}
\end{equation}
This form shows that the localization CRLB scales inversely with the accumulated sensing SNR $\Upsilon_{\rm s}$. The delay-related term is governed by the wavelength $\lambda$, the normalized bandwidth $\bar{\beta}$, and the bistatic observability factor $h$. The angular and coupling terms scale with $r_{\rm t}^2$ and are further shaped by the UPA aperture sizes, the array visibility factor $\ell_x$, and the dimensionless projected geometric factors $\bar d_{xy}$ and $\bar d_{xz}$.


\begin{remark}[Geometric Singularities]\label{rem:geo_singularities}
From \eqref{eq:crlb_geo_form}, the 3D position CRLB becomes infinite under either of the following two geometric conditions.

\begin{enumerate}
    \item First, the delay-related CRLB term becomes infinite when
        \begin{equation}\label{eq:delay_singularity}
            h=0
            \quad\Longleftrightarrow\quad
            (r_{\rm t}+r_{\rm ut})^2-r_{\rm u}^2=0.
        \end{equation}
        Since $h=0$ is equivalent to $\psi=\pi$, this condition occurs when the target lies on the line segment connecting the BS and the UAV, i.e.,
        \begin{equation}\label{eq:bistatic_collinearity}
            \mathbf q_{\rm u}
            =
            \alpha\mathbf q_{\rm t},
            \quad
            \alpha>1.
        \end{equation}
        In this case, the direction ray may overlap with the constant-bistatic-delay ellipse along the BS-UAV line, producing non-unique intersections, as shown in Fig. \ref{fig:singularity}.

    \item Second, the the angular-information-related CRLB term becomes infinite when
        \begin{equation}\label{eq:angular_singularity}
            \ell_x
            =
            \cos\theta_{\rm t}\cos\phi_{\rm t}
            =
            \frac{x_{\rm t}}{r_{\rm t}}
            =
            0.
        \end{equation}
        Geometrically, this occurs when the target direction lies in the $yz$-plane of the BS UPA, where the array cannot provide any angular information for 3D position estimation. 
\end{enumerate}

Consequently, a finite CRLB requires the target direction to avoid the UPA angular-singularity set and the bistatic geometry to avoid collinearity among the BS, the target, and the UAV.
\end{remark}

\begin{figure}[t]
    \centering
    \subfloat[Unique intersection.]{
        \includegraphics[width=0.44\columnwidth]{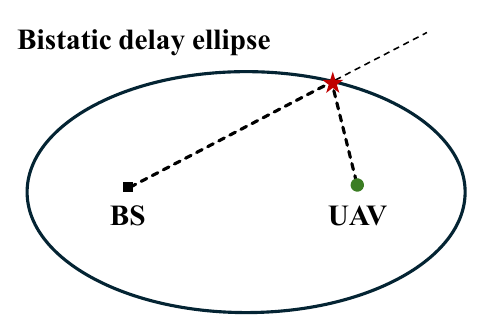}
        \label{fig:singularity1}}
    \hfill
    \subfloat[Non-unique intersections.]{
        \includegraphics[width=0.44\columnwidth]{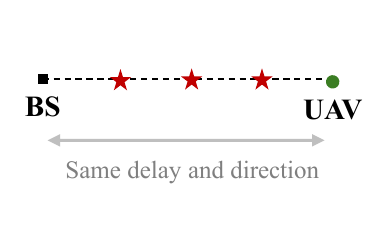}
        \label{fig:singularity2}}
    \caption{Geometric interpretation of the collinear case where CRLB becomes infinite since the target cannot be uniquely localized.}
    \label{fig:singularity}
\end{figure}

\subsection{Two-Dimensional Special Case}
\label{subsec:2d_crlb}

The general 3D CRLB in Theorem~\ref{the:3d_crlb} characterizes the localization performance when both elevation and azimuth information are available from the BS UPA. When the UPA reduces to a ULA, i.e., $N_z=1$, by neglecting the altitude difference among the target, UAV and BS, the target-localization problem becomes a two-dimensional (2D) special case, as considered in \cite{xuxiaoli}. Consider the case where both the target and the UAV are restricted to the $xy$-plane, i.e., $\theta_{\rm t}=\theta_{\rm u}=0$. The unknown target position is then reduced from the 3D vector $\mathbf q_{\rm t}=[x_{\rm t},y_{\rm t},z_{\rm t}]^{\rm T}$ to the planar vector $\mathbf p_{\rm t}=[x_{\rm t},y_{\rm t}]^{\rm T}$, with
\begin{equation}
    \mathbf q_{\rm t}
    =
    \mathbf E\mathbf p_{\rm t},
    \qquad
    \mathbf E
    =
    \begin{bmatrix}
    1&0\\
    0&1\\
    0&0
    \end{bmatrix}.
    \label{eq:2d_embedding_matrix}
\end{equation}
Therefore, the 2D CRLB is a reduced-parameter CRLB under the planar constraint, rather than the full three-dimensional CRLB evaluated at $z_{\rm t}=0$.

By the chain rule, the reduced Jacobian with respect to the planar position $\mathbf p_{\rm t}$ is
\begin{equation}
    \mathbf J_{\mathbf p_{\rm t}}
    =
    \frac{\partial\boldsymbol\Theta}{\partial\mathbf p_{\rm t}}
    =
    \frac{\partial\boldsymbol\Theta}{\partial\mathbf q_{\rm t}}
    \frac{\partial\mathbf q_{\rm t}}{\partial\mathbf p_{\rm t}}
    = \left.
    \mathbf J_{\mathbf q_{\rm t}}\mathbf E \right|_{\theta_{\rm t}=0,\ \theta_{\rm u}=0}.
    \label{eq:2d_reduced_jacobian}
\end{equation}
The corresponding two-dimensional Fisher information matrix is thus
\begin{equation}
    \mathbf F_{\rm 2D}
    =
    \mathbf J_{\mathbf p_{\rm t}}^{\rm T}
    \mathbf I(\boldsymbol\Theta)
    \mathbf J_{\mathbf p_{\rm t}}
    =
    \mathbf E^{\rm T}
    \mathbf F(\mathbf q_{\rm t},\mathbf q_{\rm u})
    \mathbf E |_{\theta_{\rm t}=0,\ \theta_{\rm u}=0}.
    \label{eq:2d_fim_reduced}
\end{equation}

\begin{figure}[t] 
        \centering \includegraphics[width=0.7\columnwidth]{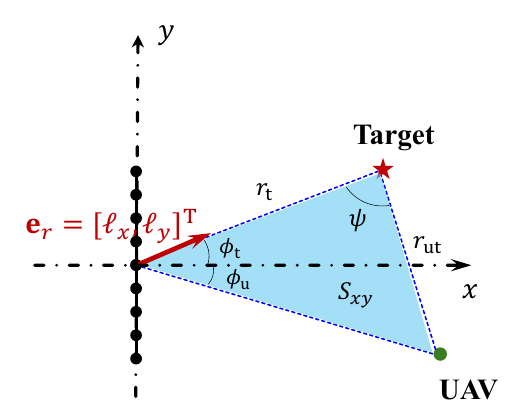}
        \caption{\label{fig:Geometric_Interpretation_2D} An illustration of 2D geometric interpretation of the CRLB.}
\end{figure}

\begin{theorem}\label{the:2d_crlb}
For the 2D UAV-assisted bistatic ISAC system with the BS array reduced to a ULA along the $y$-axis, the CRLB for target position estimation is given by
\begin{equation}\label{eq:2d_crlb}
\begin{aligned}
    &\mathcal C_{\rm 2D}(r_{\rm u},\phi_{\rm u};r_{\rm t},\phi_{\rm t})
    \\
    &=
    \frac{
    4r_{\rm ut}^3r_{\rm t}^5
    }
    {
    \xi_y\cos^2\phi_{\rm t}
    \left((r_{\rm ut}+r_{\rm t})^2-r_{\rm u}^2\right)
    }
    +
    \frac{
    4c^2r_{\rm ut}^4r_{\rm t}^4
    }
    {
    \xi_\tau
    \left((r_{\rm ut}+r_{\rm t})^2-r_{\rm u}^2\right)^2
    } .
\end{aligned}
\end{equation}
\end{theorem}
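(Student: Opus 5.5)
The plan is to work directly with the reduced Fisher information \eqref{eq:2d_fim_reduced}, not with the 3D closed form of Theorem~\ref{the:3d_crlb}. Setting $\xi_z\to0$ in \eqref{eq:3d_crlb} would make the third term blow up, because the reduced-parameter bound is not the 3D bound restricted to the plane. So I would specialize each ingredient to $\theta_{\rm t}=\theta_{\rm u}=0$ and $N_z=1$, write $\mathbf F_{\rm 2D}$ as a sum of two rank-one $2\times2$ matrices, and invert it in closed form.

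\emph{Step 1 (specialize the FIM).} With $N_z=1$ we get $\xi_z=0$. With $\theta_{\rm t}=0$, Lemma~\ref{lem:angular_fim} gives $I_{\theta_{\rm t}\theta_{\rm t}}=0$, $I_{\theta_{\rm t}\phi_{\rm t}}=0$ and $I_{\phi_{\rm t}\phi_{\rm t}}=\xi_y\cos^2\phi_{\rm t}/(r_{\rm ut}^2r_{\rm t}^2)$. The elevation row of $\mathbf J_{\mathbf q_{\rm t}}\mathbf E$ therefore does not contribute.

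Let $\bar{\mathbf e}_r=[\cos\phi_{\rm t},\sin\phi_{\rm t}]^{\rm T}$ and $\bar{\mathbf e}_\phi=[-\sin\phi_{\rm t},\cos\phi_{\rm t}]^{\rm T}$ be the planar basis vectors. Let $\mathbf g=\mathbf E^{\rm T}(\mathbf e_r+\mathbf e_{\rm ut})$, which equals the full 3D vector $\mathbf e_r+\mathbf e_{\rm ut}$ because both points lie in the $xy$-plane. Then
\begin{equation*}
\mathbf F_{\rm 2D}=a\,\bar{\mathbf e}_\phi\bar{\mathbf e}_\phi^{\rm T}+b\,\mathbf g\mathbf g^{\rm T},
\quad
a=\frac{\xi_y\cos^2\phi_{\rm t}}{r_{\rm ut}^2r_{\rm t}^4},
\quad
b=\frac{\xi_\tau}{c^2r_{\rm ut}^2r_{\rm t}^2}.
\end{equation*}

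\emph{Step 2 (closed-form inverse trace).} For $\mathbf M=a\mathbf u\mathbf u^{\rm T}+b\mathbf v\mathbf v^{\rm T}$ in $\mathbb R^2$, the trace is ${\rm Tr}(\mathbf M)=a\|\mathbf u\|^2+b\|\mathbf v\|^2$ and the determinant is $\det\mathbf M=ab\,(\det[\mathbf u,\mathbf v])^2$. Hence
\begin{equation*}
{\rm Tr}(\mathbf M^{-1})=\frac{a\|\mathbf u\|^2+b\|\mathbf v\|^2}{ab\,(\det[\mathbf u,\mathbf v])^2}.
\end{equation*}
Take $\mathbf u=\bar{\mathbf e}_\phi$ and $\mathbf v=\mathbf g$.

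\emph{Step 3 (geometric quantities, the main step).} Since $\{\bar{\mathbf e}_r,\bar{\mathbf e}_\phi\}$ is orthonormal, $(\det[\bar{\mathbf e}_\phi,\mathbf g])^2=(\bar{\mathbf e}_r^{\rm T}\mathbf g)^2=(1+\cos\psi)^2=h^2$. We also have $\|\mathbf g\|^2=2+2\cos\psi=2h$, with $h$ as in \eqref{eq:h_definition}. This identification is the only nonroutine point, and I would check it against \eqref{eq:h_definition} using $\mathbf e_r^{\rm T}\mathbf e_{\rm ut}=\cos\psi$.

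Substituting into Step 2 gives
\begin{equation*}
\mathcal C_{\rm 2D}=\frac{1}{bh^2}+\frac{2}{ah}.
\end{equation*}

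\emph{Step 4 (back-substitution).} Using $h=\big((r_{\rm t}+r_{\rm ut})^2-r_{\rm u}^2\big)/(2r_{\rm t}r_{\rm ut})$, the first term becomes $4c^2r_{\rm ut}^4r_{\rm t}^4/\big(\xi_\tau((r_{\rm ut}+r_{\rm t})^2-r_{\rm u}^2)^2\big)$. The second term becomes $4r_{\rm ut}^3r_{\rm t}^5/\big(\xi_y\cos^2\phi_{\rm t}((r_{\rm ut}+r_{\rm t})^2-r_{\rm u}^2)\big)$. These are exactly the two terms of \eqref{eq:2d_crlb}.

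Note that the angular term carries only one power of $h$ in the denominator, unlike the 3D case. This reflects the coupling between the single angle and the delay.
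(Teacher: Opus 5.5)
Your proposal is correct and follows essentially the paper's route. The paper's proof just substitutes the reduced Fisher information $\mathbf F_{\rm 2D}=\mathbf E^{\rm T}\mathbf F\mathbf E|_{\theta_{\rm t}=\theta_{\rm u}=0}$ into ${\rm Tr}(\mathbf F_{\rm 2D}^{-1})$ and expands, and your rank-one decomposition $a\,\bar{\mathbf e}_\phi\bar{\mathbf e}_\phi^{\rm T}+b\,\mathbf g\mathbf g^{\rm T}$, combined with ${\rm Tr}(\mathbf M^{-1})={\rm Tr}(\mathbf M)/\det\mathbf M$, $(\det[\bar{\mathbf e}_\phi,\mathbf g])^2=h^2$ and $\|\mathbf g\|^2=2h$, carries out that expansion explicitly to give $1/(bh^2)+2/(ah)$, which matches both terms of the theorem.
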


\begin{IEEEproof}
Substitute \eqref{eq:2d_fim_reduced} into \eqref{eq:position_crlb_def} and expand.
\end{IEEEproof}

\begin{remark}[Geometric Interpretation of the 2D CRLB]\label{rem:2d_geo_interpretation}
Using the symbols defined in \ref{subsec:3dCRLB}, the two-dimensional CRLB in \eqref{eq:2d_crlb} can be equivalently written as
\begin{equation}\label{eq:2d_crlb_geo_form}
\begin{aligned}
    \mathcal C_{\rm 2D}(\mathbf q_{\rm t},\mathbf q_{\rm u})
    &=
    \underbrace{
    \frac{
    c^2r_{\rm ut}^2r_{\rm t}^2
    }
    {
    \xi_\tau h^2
    }
    }_{\mathcal C_\tau^{\rm 2D}}
    +
    \underbrace{
    \frac{
    r_{\rm ut}^2r_{\rm t}^4
    }
    {
    \xi_y\ell_x^2
    }
    }_{\mathcal C_{\rm ang}^{\rm 2D}}
    +
    \underbrace{
    \frac{
    4r_{\rm t}^2S_{xy}^2
    }
    {
    \xi_y h^2\ell_x^2
    }
    }_{\mathcal C_{\rm cpl}^{\rm 2D}} .
\end{aligned}
\end{equation}

\end{remark}


\begin{remark}[Geometric Singularities in the 2D Case]\label{rem:2d_geo_singularities}
From \eqref{eq:2d_crlb_geo_form}, the 2D position CRLB becomes infinite under either of the following two geometric conditions.

\begin{enumerate}
    \item The collinearity among the BS, the target, and the UAV
    \begin{equation}\label{eq:2d_delay_singularity}
        h=0
        \Longleftrightarrow
        \mathbf q_{\rm u}=\alpha\mathbf q_{\rm t},\ \alpha>1 .
    \end{equation}

    \item The BS-target direction is parallel to the array axis
    \begin{equation}\label{eq:2d_angular_singularity}
        \ell_x = \cos\phi_{\rm t} = 0 .
    \end{equation}
\end{enumerate}

\end{remark}

\subsection{CRLB Heatmap and Volume Visualization}\label{subsec:visualization}
To illustrate the proposed CRLB characterization, we provide representative numerical examples in both 2D and 3D settings. The BS is located at the origin, and the visible half-space $x>0$ is considered to avoid the front-back ambiguity of the BS UPA. The main parameters are summarized in Table~\ref{tab:sim_params}. Unless otherwise specified, CRLB clipping is used only for visualization in heatmaps and isosurfaces.

\begin{table}[!t]
\centering
\caption{Simulation Parameters}
\label{tab:sim_params}
\renewcommand{\arraystretch}{1.12}
\begin{tabular}{p{0.43\linewidth}p{0.18\linewidth}p{0.22\linewidth}}
\hline
\textbf{Parameter} & \textbf{Symbol} & \textbf{Value} \\
\hline
Speed of light & $c$ & $3\times 10^8$ m/s \\
Carrier frequency & $f_{\rm c}$ & $24$ GHz \\
UAV transmit power & $P_{\rm t}$ & $1$ W \\
Noise power spectral density & $N_0$ & $-170$ dBm/Hz \\
RCS-related coefficient & $\kappa$ & $0.1$ m$^2$ \\
Observation duration & $T_{\rm obs}$ & $1$ ms \\
RMS bandwidth & $\beta_{\rm rms}$ & $20$ MHz \\
UPA size at the BS & $(N_y,N_z)$ & $(8,8)$ \\
\hline
\end{tabular}
\end{table}

We first visualize the spatial variation of the CRLB. In Fig.~\ref{fig:sim_2d_crlb_target}, the UAV is fixed and the target location varies over the visible half-plane. The CRLB distribution is highly nonuniform. A high-CRLB strip appears along the BS-UAV direction, reflecting the degradation of bistatic observability near the collinearity direction. Around the UAV, the contours are indented toward the BS, showing that the UAV-near sensing geometry is direction-dependent rather than radially symmetric. In Fig.~\ref{fig:sim_2d_crlb_uav}, the target is fixed and the UAV location varies. Since the BS-array angular visibility of the target is fixed, the CRLB variation is mainly induced by the UAV-dependent bistatic geometry and UAV-target distance. The resulting cardioid-like pattern, with a notch along the BS-target direction, indicates that UAV placements with comparable target distances can still yield different localization accuracy.

\begin{figure}[t]
    \centering
    \subfloat[2D CRLB with fixed UAV.]{
        \includegraphics[width=0.45\columnwidth]{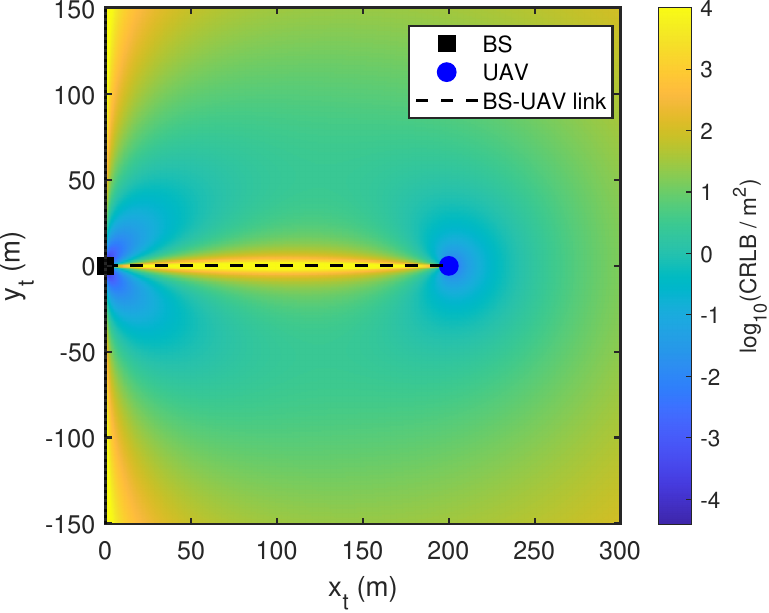}
        \label{fig:sim_2d_crlb_target}}
    \hfill
    \subfloat[2D CRLB with fixed target.]{
        \includegraphics[width=0.45\columnwidth]{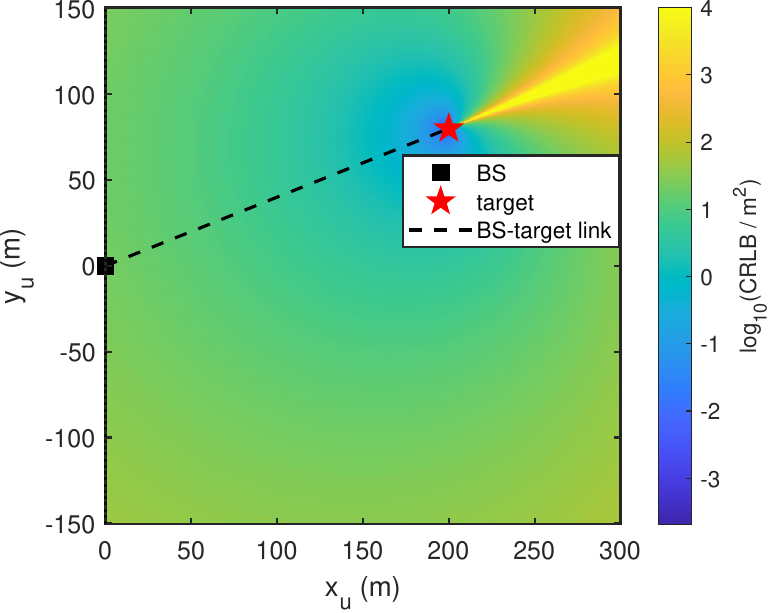}
        \label{fig:sim_2d_crlb_uav}}
    \caption{Plot of $\log_{10}\mathcal C_{\rm 2D}(\mathbf q_{\rm t},\mathbf q_{\rm u})$, by varying the target location $\mathbf q_{\rm t}$, with $\mathbf q_{\rm u}$ fixed to $(100,0)$; and by varying the UAV location $\mathbf q_{\rm u}$, with $\mathbf q_{\rm t}$ fixed to $(100,40)$.}
    \label{fig:sim_2d_crlb_heatmap}
\end{figure}

Fig.~\ref{fig:sim_3d_crlb_volume} further illustrates the 3D CRLB distribution. 
In Fig.~\ref{fig:sim_3d_crlb_target}, the UAV is fixed and the target location varies. The isosurfaces are clearly non-spherical. The BS-near region forms a dimpled-cap-shaped body whose indentation is aligned with the BS-UAV link, while the UAV-near region exhibits a cusp-like indentation toward the BS. For a looser CRLB level, these two regions may merge into a connected body with an internal cavity. In Fig.~\ref{fig:sim_3d_crlb_uav}, the target is fixed and the UAV location is scanned. The isosurface is centered around the target region but remains anisotropic. Since the target direction relative to the BS UPA is fixed, $\ell_x$ remains unchanged, and the shape variation is mainly governed by the UAV-dependent factors $h$, $S_{xy}$, and $S_{xz}$. This confirms that UAV placement affects the CRLB through propagation loss, bistatic delay observability, and angle-delay coupling.

This volume visualization can be interpreted as a CRLB-based counterpart of the Cassini oval in bistatic radar. For a prescribed CRLB level $\Gamma$, we define the \textit{Cram\'{e}r-Rao surface} as
\begin{equation}\label{eq:crlb_surface}
    \mathcal S_{\rm C}(\Gamma;\mathbf q_{\rm u})
    \triangleq
    \left\{
    \mathbf q_{\rm t}\in\mathbb R^3
    \mid
    \mathcal C(\mathbf q_{\rm t},\mathbf q_{\rm u})=\Gamma
    \right\},
\end{equation}
Unlike the Cassini oval, this \textit{Cram\'{e}r-Rao surface} is determined not only by the bistatic distance product, but also by the bistatic geometry and system observability. Similarly, when the target is fixed, $\{\mathbf q_{\rm u}\mid \mathcal C(\mathbf q_{\rm t},\mathbf q_{\rm u})=\Gamma\}$ characterizes the UAV-placement isosurface that achieves the same localization CRLB for the prescribed target.

\begin{figure}[t]
    \centering
    \subfloat[3D CRLB with fixed UAV.]{
        \includegraphics[width=0.45\columnwidth]{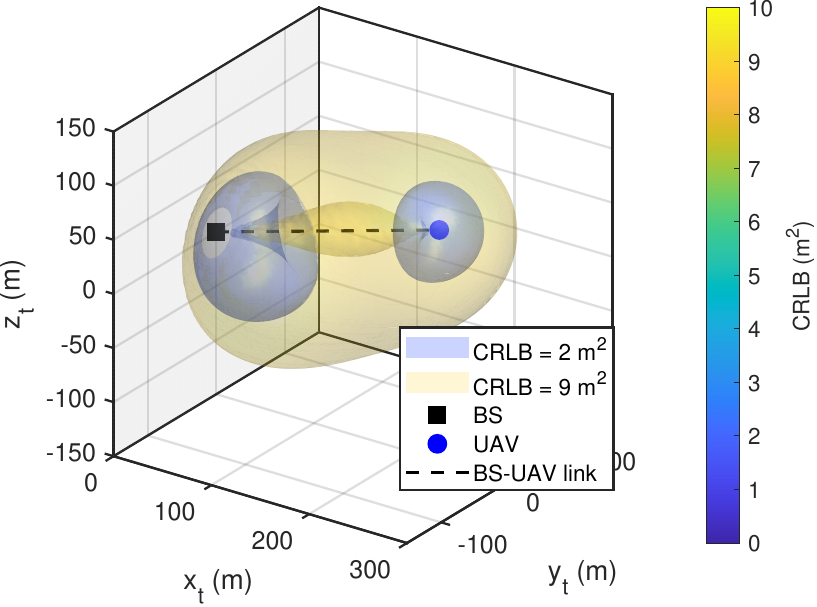}
        \label{fig:sim_3d_crlb_target}}
    \hfill
    \subfloat[3D CRLB with fixed target.]{
        \includegraphics[width=0.45\columnwidth]{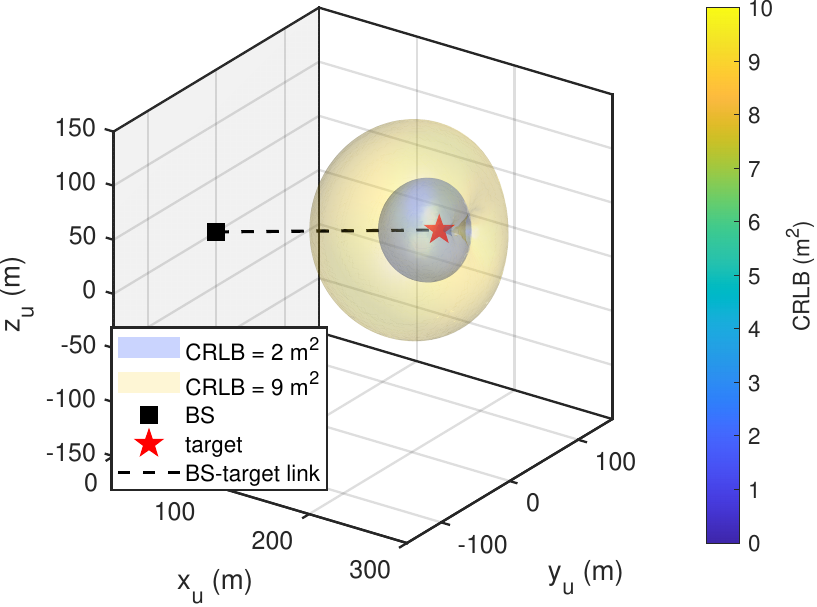}
        \label{fig:sim_3d_crlb_uav}}
    \caption{Plot of $\mathcal C(\mathbf q_{\rm t},\mathbf q_{\rm u})$, by varying the target location $\mathbf q_{\rm t}$, with $\mathbf q_{\rm u}$ fixed to $(100,0)$; and by varying the UAV location $\mathbf q_{\rm u}$, with $\mathbf q_{\rm t}$ fixed to $(100,40)$.}
    \label{fig:sim_3d_crlb_volume}
\end{figure}


\section{CRLB-Constrained Sensing Coverage}\label{sec:sensing_coverage}

The closed-form CRLB derived above characterizes the localization accuracy at a given target point. In practical UAV-assisted ISAC systems, however, the UAV is often deployed to provide reliable sensing over a spatial region rather than for a single known target point. Therefore, based on the definition of \textit{Cram\'{e}r-Rao surface}, we can define a more relevant metric termed the CRLB-constrained sensing coverage. For a fixed UAV location $\mathbf q_{\rm u}$ and a prescribed CRLB threshold $\Gamma>0$, the CRLB-constrained sensing coverage is defined as the set of target locations whose localization CRLB does not exceed $\Gamma$.
\begin{equation}
    \mathcal A_{\rm cov}(\mathbf q_{\rm u},\Gamma)
    \triangleq
    \left\{
    \mathbf q_{\rm t}
    \mid
    \mathcal C(\mathbf q_{\rm t},\mathbf q_{\rm u})\leq \Gamma
    \right\}.
    \label{eq:general_crlb_coverage}
\end{equation}

We first analyze the 2D coverage to expose the main boundary mechanisms in a transparent form, and then extend the obtained insights to the 3D case. In particular, for a relatively small $\Gamma$, the low-CRLB coverage is mainly concentrated around the BS and the UAV. This motivates a local boundary analysis around these two points. The symbol $\approx$ denotes the first-order local approximation as $r_{\rm t}/r_{\rm u}\to0$ in the BS-near region or $r_{\rm ut}/r_{\rm u}\to0$ in the UAV-near region.


\subsection{Two-Dimensional Sensing Coverage}\label{subsec:2d_coverage}

For a fixed UAV location $\mathbf q_{\rm u}=r_{\rm u}[\cos\phi_{\rm u},\sin\phi_{\rm u}]^{\rm T}$, the two-dimensional CRLB-constrained sensing coverage region is defined as
\begin{equation}\label{eq:2d_coverage_def}
    \mathcal A_{\rm cov}^{\rm 2D}(\mathbf q_{\rm u},\Gamma)
    \triangleq
    \left\{
    \mathbf q_{\rm t}\in\mathbb R^2
    \mid
    \mathcal C_{\rm 2D}(\mathbf q_{\rm t},\mathbf q_{\rm u})\le \Gamma
    \right\}.
\end{equation}
The exact boundary of $\mathcal A_{\rm cov}^{\rm 2D}$ is determined by the \textit{Cram\'{e}r-Rao curve} $\mathcal S_{\rm C}^{\rm 2D}(\Gamma;\mathbf q_{\rm u})\triangleq\left\{\mathbf q_{\rm t}\in\mathbb R^2\mid\mathcal C(\mathbf q_{\rm t},\mathbf q_{\rm u})=\Gamma\right\}$,
which generally does not admit a tractable global closed-form expression. Nevertheless, local closed-form boundaries can be obtained around the BS and the UAV. For a small CRLB threshold $\Gamma$, the coverage region is mainly concentrated around two local components: one near the BS and the other near the UAV, similar to that illustrated in Fig.~\ref{fig:sim_2d_crlb_heatmap}.

\subsubsection{BS-Near Coverage Boundary}

For the BS-near region, let $\mathbf q_{\rm t}=r_{\rm t}[\cos\phi_{\rm t},\sin\phi_{\rm t}]^{\rm T}$ with $r_{\rm t}\ll r_{\rm u}$. Then
\begin{equation}\label{eq:2d_bs_near_approx}
    r_{\rm ut}\approx r_{\rm u},
    \quad
    h\approx 1-\cos(\phi_{\rm t}-\phi_{\rm u}).
\end{equation}

\begin{proposition}[BS-near 2D coverage boundary]\label{prop:2d_bs_near_coverage}
In the BS-near region with $r_{\rm t}\ll r_{\rm u}$, the local CRLB satisfies
\begin{equation}\label{eq:2d_bs_near_crlb_prop}
    \mathcal C_{\rm 2D}^{\rm BS}(r_{\rm t},\phi_{\rm t})
    \approx
    A_{\rm BS}(\phi_{\rm t})r_{\rm t}^2
    +
    B_{\rm BS}(\phi_{\rm t})r_{\rm t}^4.
\end{equation}
where
\begin{equation}\label{eq:2d_A_BS}
    A_{\rm BS}(\phi_{\rm t})
    \triangleq
    \frac{c^2r_{\rm u}^2}
    {\xi_\tau\left(1-\cos(\phi_{\rm t}-\phi_{\rm u})\right)^2},
\end{equation}
\begin{equation}\label{eq:2d_B_BS}
    B_{\rm BS}(\phi_{\rm t})
    \triangleq
    \frac{
    2r_{\rm u}^2
    }
    {
    \xi_y
    \cos^2\phi_{\rm t}
    \left(1-\cos(\phi_{\rm t}-\phi_{\rm u})\right)
    }.
\end{equation}

Thus, the corresponding local coverage boundary is
\begin{equation}\label{eq:2d_bs_near_boundary_prop}
    r_{{\rm t},{\rm cov}}^{\rm BS}(\phi_{\rm t})
    \approx
    \left(
    \frac{
    -A_{\rm BS}(\phi_{\rm t})
    +
    \sqrt{
    A_{\rm BS}^2(\phi_{\rm t})
    +
    4B_{\rm BS}(\phi_{\rm t})\Gamma
    }
    }
    {2B_{\rm BS}(\phi_{\rm t})}
    \right)^{1/2}.
\end{equation}
\end{proposition}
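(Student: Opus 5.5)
We would start from the geometric form of the 2D CRLB in Remark~\ref{rem:2d_geo_interpretation}, rather than from \eqref{eq:2d_crlb} directly, because each term there ($\mathcal C_\tau^{\rm 2D}$, $\mathcal C_{\rm ang}^{\rm 2D}$, $\mathcal C_{\rm cpl}^{\rm 2D}$) depends on the geometry only through $r_{\rm ut}$, $h$, $\ell_x=\cos\phi_{\rm t}$ and $S_{xy}$. The plan is to replace each of these by its leading-order behaviour as $r_{\rm t}/r_{\rm u}\to0$. We would then regroup the result as a polynomial in $r_{\rm t}^2$, and finally solve the threshold equation $\mathcal C_{\rm 2D}=\Gamma$ for $r_{\rm t}$.

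\emph{Step 1 (leading-order geometry).} For $r_{\rm t}\ll r_{\rm u}$ we have $r_{\rm ut}=r_{\rm u}(1+O(r_{\rm t}/r_{\rm u}))$. The UAV-target unit vector satisfies $\mathbf e_{\rm ut}\to-\mathbf q_{\rm u}/r_{\rm u}$, so $h=1+\mathbf e_r^{\rm T}\mathbf e_{\rm ut}\to 1-\cos(\phi_{\rm t}-\phi_{\rm u})$, which is \eqref{eq:2d_bs_near_approx}. From \eqref{eq:Sxy} with $\theta_{\rm t}=\theta_{\rm u}=0$, the projected area is exact: $4S_{xy}^2=r_{\rm u}^2r_{\rm t}^2\sin^2(\phi_{\rm u}-\phi_{\rm t})$.

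\emph{Step 2 (substitution and regrouping).} Inserting these into the three terms gives the following:
\begin{itemize}
\item $\mathcal C_\tau^{\rm 2D}\approx c^2r_{\rm u}^2r_{\rm t}^2/\big(\xi_\tau(1-\cos(\phi_{\rm t}-\phi_{\rm u}))^2\big)$, which is exactly $A_{\rm BS}r_{\rm t}^2$.
\item $\mathcal C_{\rm ang}^{\rm 2D}\approx r_{\rm u}^2r_{\rm t}^4/(\xi_y\cos^2\phi_{\rm t})$.
\item $\mathcal C_{\rm cpl}^{\rm 2D}\approx r_{\rm u}^2r_{\rm t}^4\sin^2(\phi_{\rm t}-\phi_{\rm u})/\big(\xi_y\cos^2\phi_{\rm t}(1-\cos(\phi_{\rm t}-\phi_{\rm u}))^2\big)$.
\end{itemize}
The last two are both of order $r_{\rm t}^4$ and combine through the identity
\[
1+\frac{\sin^2\delta}{(1-\cos\delta)^2}=1+\frac{1+\cos\delta}{1-\cos\delta}=\frac{2}{1-\cos\delta},\qquad \delta=\phi_{\rm t}-\phi_{\rm u}.
\]
This yields $B_{\rm BS}r_{\rm t}^4$ and establishes \eqref{eq:2d_bs_near_crlb_prop}.

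\emph{Step 3 (boundary).} Set $u=r_{\rm t}^2$. For fixed $\phi_{\rm t}$, the approximation $A_{\rm BS}u+B_{\rm BS}u^2$ is strictly increasing in $u\ge0$, since $A_{\rm BS},B_{\rm BS}>0$ away from the singular directions. Hence the local coverage along the ray $\phi_{\rm t}$ is the interval $[0,u^\star]$, where $u^\star$ is the unique nonnegative root of $B_{\rm BS}u^2+A_{\rm BS}u-\Gamma=0$. Taking the positive root of the quadratic and then the square root gives \eqref{eq:2d_bs_near_boundary_prop}.

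The main obstacle is Step 1: justifying that replacing $h$ and $r_{\rm ut}$ by their zeroth-order values is a consistent first-order approximation. The correction to $h$ is $O(r_{\rm t}/r_{\rm u})$ relative to $1-\cos\delta$. That correction is negligible only when $1-\cos\delta$ is not itself small, so the approximation degrades near the BS-UAV direction $\phi_{\rm t}\approx\phi_{\rm u}$. The same holds near the array-axis directions $\cos\phi_{\rm t}\approx0$, which are the singular sets of Remark~\ref{rem:2d_geo_singularities}. We would handle this by stating the result for directions bounded away from these sets and checking that the neglected terms are $O(r_{\rm t}/r_{\rm u})$ relative to the retained ones there.
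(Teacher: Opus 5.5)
Your proposal is correct and follows the paper's proof in Appendix~\ref{app:bs_near_coverage} essentially step for step, with $\delta=\phi_{\rm t}-\phi_{\rm u}$: you start from the geometric 2D form, substitute $r_{\rm ut}\approx r_{\rm u}$, $h\approx 1-\cos\delta$ and $4S_{xy}^2=r_{\rm u}^2r_{\rm t}^2\sin^2\delta$, merge the angular and coupling terms via $1+\sin^2\delta/(1-\cos\delta)^2=2/(1-\cos\delta)$, and take the positive root of the quadratic in $r_{\rm t}^2$ (your monotonicity argument for choosing that root is a small addition the paper omits). Your closing caveat is more conservative than needed, and the reason you give for it is wrong: the first-order correction to $h$ is $(r_{\rm t}/r_{\rm u})\sin^2\delta=(r_{\rm t}/r_{\rm u})(1-\cos\delta)(1+\cos\delta)$, so the relative error stays $O(r_{\rm t}/r_{\rm u})$ uniformly even as $\phi_{\rm t}\to\phi_{\rm u}$, and since $\ell_x=\cos\phi_{\rm t}$ is not approximated at all, neither singular direction has to be excluded.
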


\begin{IEEEproof}
Please refer to Appendix~\ref{app:bs_near_coverage}.
\end{IEEEproof}


\subsubsection{UAV-Near Coverage Boundary}

For the UAV-near region, let the target be locally represented as
\begin{equation}\label{eq:2d_uav_near_coordinate}
    \mathbf q_{\rm t}
    =
    \mathbf q_{\rm u}
    +
    r_{\rm ut}
    \begin{bmatrix}
        \cos\beta\\
        \sin\beta
    \end{bmatrix},
\end{equation}
where $r_{\rm ut}\ll r_{\rm u}$ and $\beta$ denotes the local direction from the UAV to the target. Then
\begin{equation}\label{eq:2d_uav_near_approx}
    r_{\rm t}\approx r_{\rm u},
    \quad
    h\approx1+\cos(\beta-\phi_{\rm u}).
\end{equation}

\begin{proposition}[UAV-near 2D coverage boundary]\label{prop:2d_uav_near_coverage}
In the UAV-near region with $r_{\rm ut}\ll r_{\rm u}$, the local CRLB satisfies
\begin{equation}\label{eq:2d_uav_near_crlb_prop}
\begin{aligned}
    \mathcal C_{\rm 2D}^{\rm UAV}(r_{\rm ut},\beta)
    &\approx
    r_{\rm ut}^2
    \left[
    \frac{c^2r_{\rm u}^2}
    {\xi_\tau\left(1+\cos(\beta-\phi_{\rm u})\right)^2}
    \right.
    \\
    &\quad\left.
    +
    \frac{2r_{\rm u}^4}
    {\xi_y\cos^2\phi_{\rm u}
    \left(1+\cos(\beta-\phi_{\rm u})\right)}
    \right].
\end{aligned}
\end{equation}
Thus, the corresponding local coverage boundary is
\begin{equation}\label{eq:2d_uav_near_boundary_prop}
    r_{{\rm ut},{\rm cov}}^{\rm UAV}(\beta)
    \approx
    \left[
    \frac{\Gamma}
    {
    \frac{c^2r_{\rm u}^2}
    {\xi_\tau\left(1+\cos(\beta-\phi_{\rm u})\right)^2}
    +
    \frac{2r_{\rm u}^4}
    {\xi_y\cos^2\phi_{\rm u}
    \left(1+\cos(\beta-\phi_{\rm u})\right)}
    }
    \right]^{1/2}.
\end{equation}
\end{proposition}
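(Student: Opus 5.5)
We start from the geometric form of the 2D CRLB in Remark~\ref{rem:2d_geo_interpretation}, i.e., $\mathcal C_{\rm 2D}=\mathcal C_\tau^{\rm 2D}+\mathcal C_{\rm ang}^{\rm 2D}+\mathcal C_{\rm cpl}^{\rm 2D}$. We substitute the UAV-near parametrization \eqref{eq:2d_uav_near_coordinate} and keep only the leading order in $r_{\rm ut}/r_{\rm u}$. Since $\mathbf q_{\rm t}-\mathbf q_{\rm u}=r_{\rm ut}[\cos\beta,\sin\beta]^{\rm T}$, we have $\mathbf e_{\rm ut}=[\cos\beta,\sin\beta]^{\rm T}$ exactly. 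We also have $\mathbf e_r=\mathbf q_{\rm t}/r_{\rm t}\to[\cos\phi_{\rm u},\sin\phi_{\rm u}]^{\rm T}$ and $r_{\rm t}\to r_{\rm u}$. This gives \eqref{eq:2d_uav_near_approx}, namely $h\approx 1+\cos(\beta-\phi_{\rm u})$. In the same limit $\phi_{\rm t}\to\phi_{\rm u}$, so $\ell_x^2=\cos^2\phi_{\rm t}\approx\cos^2\phi_{\rm u}$.

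Next we evaluate each term with these replacements. The delay term becomes $\mathcal C_\tau^{\rm 2D}\approx c^2r_{\rm ut}^2r_{\rm u}^2/(\xi_\tau(1+\cos(\beta-\phi_{\rm u}))^2)$, which is exactly the first bracketed term. For the coupling term we need $S_{xy}$. Because $\mathbf q_{\rm t}\times\mathbf q_{\rm u}=(\mathbf q_{\rm t}-\mathbf q_{\rm u})\times\mathbf q_{\rm u}$, we get $2S_{xy}=r_{\rm ut}r_{\rm u}|\sin(\beta-\phi_{\rm u})|$ exactly. Hence $\mathcal C_{\rm cpl}^{\rm 2D}\approx r_{\rm u}^4r_{\rm ut}^2\sin^2(\beta-\phi_{\rm u})/(\xi_y h^2\cos^2\phi_{\rm u})$. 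The angular term is $\mathcal C_{\rm ang}^{\rm 2D}\approx r_{\rm ut}^2r_{\rm u}^4/(\xi_y\cos^2\phi_{\rm u})$.

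The key algebraic step is to combine the angular and coupling terms, which share the prefactor $r_{\rm ut}^2r_{\rm u}^4/(\xi_y\cos^2\phi_{\rm u})$. With $u=\cos(\beta-\phi_{\rm u})$ and $h=1+u$,
\begin{equation*}
1+\frac{1-u^2}{(1+u)^2}=1+\frac{1-u}{1+u}=\frac{2}{1+u},
\end{equation*}
which produces the second bracketed term in \eqref{eq:2d_uav_near_crlb_prop}. Both bracketed terms are proportional to $r_{\rm ut}^2$. Setting $\mathcal C_{\rm 2D}^{\rm UAV}=\Gamma$ and solving for $r_{\rm ut}$ then gives \eqref{eq:2d_uav_near_boundary_prop} directly. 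As a cross-check, the combined angular-plus-coupling term equals the first term of \eqref{eq:2d_crlb} in Theorem~\ref{the:2d_crlb}, $4r_{\rm ut}^3r_{\rm t}^5/(\xi_y\cos^2\phi_{\rm t}\,2r_{\rm t}r_{\rm ut}h)$. Evaluated at $r_{\rm t}\approx r_{\rm u}$, this gives the same result, which confirms the simplification.

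The main obstacle is justifying the leading-order replacements. In particular, $h$ must be kept as a function of $\beta$ rather than frozen, because $\mathbf e_{\rm ut}$ does not converge as $r_{\rm ut}\to0$. Only $\mathbf e_r$ and $\phi_{\rm t}$ converge, with errors of order $O(r_{\rm ut}/r_{\rm u})$. We would argue that these errors give relative corrections of order $O(r_{\rm ut}/r_{\rm u})$ to each term, uniformly in $\beta$ away from $h=0$ and away from $\cos\phi_{\rm u}=0$. This is consistent with the meaning of ``$\approx$'' stated at the start of Section~\ref{sec:sensing_coverage}.
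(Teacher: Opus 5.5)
Your proposal is correct and follows essentially the same route as the paper's proof in Appendix~\ref{app:uav_near_coverage}. Both substitute $r_{\rm t}\approx r_{\rm u}$, $\ell_x\approx\cos\phi_{\rm u}$, $h\approx 1+\cos(\beta-\phi_{\rm u})$ and $4S_{xy}^2=r_{\rm u}^2r_{\rm ut}^2\sin^2(\beta-\phi_{\rm u})$ into the geometric 2D CRLB, merge the angular and coupling terms via $1+\sin^2x/(1+\cos x)^2=2/(1+\cos x)$, and solve for $r_{\rm ut}$; your extra observations (the $S_{xy}$ identity being exact, and the consistency check against Theorem~\ref{the:2d_crlb}) are valid refinements that do not change the argument.
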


\begin{IEEEproof}
Please refer to Appendix~\ref{app:uav_near_coverage}.
\end{IEEEproof}


\subsection{3D Local Sensing Coverage}\label{subsec:3d_coverage}

We next extend the local coverage analysis to the general 3D case. For a fixed UAV location $\mathbf q_{\rm u}$ and a prescribed CRLB threshold $\Gamma>0$, the 3D CRLB-constrained sensing coverage region is defined as
\begin{equation}\label{eq:3d_coverage_def}
    \mathcal A_{\rm cov}^{\rm 3D}(\mathbf q_{\rm u},\Gamma)
    \triangleq
    \left\{
    \mathbf q_{\rm t}\in\mathbb R^3
    \mid
    \mathcal C(\mathbf q_{\rm t},\mathbf q_{\rm u})\le \Gamma
    \right\}.
\end{equation}
The exact global boundary of $\mathcal A_{\rm cov}^{\rm 3D}$ is determined by the \textit{Cram\'{e}r-Rao surface} in \eqref{eq:crlb_surface}, which generally does not admit a tractable closed form. Similar to the two-dimensional case, the 3D low-CRLB coverage can also be locally interpreted as the union of a BS-near component and a UAV-near component, as illustrated in Fig.~\ref{fig:sim_3d_crlb_volume}. The BS-near component forms a dimpled-cap-shaped body due to the bistatic collinearity direction, while the UAV-near component forms a cardioid-like body with a cusp toward the BS.
Define the UAV direction as $\mathbf e_{\rm u}=\mathbf q_{\rm u}/r_{\rm u}$, the BS-target direction as $\mathbf e_{\rm t}=\mathbf q_{\rm t}/r_{\rm t}$, and the UAV-target direction as $\mathbf e_{\rm ut}=(\mathbf q_{\rm t}-\mathbf q_{\rm u})/r_{\rm ut}$.

\subsubsection{BS-Near Coverage Boundary}

For the BS-near region, we have
\begin{equation}
    r_{\rm ut}\approx r_{\rm u},
    \quad
    \mathbf e_{\rm ut}
    =
    \frac{\mathbf q_{\rm t}-\mathbf q_{\rm u}}{r_{\rm ut}}
    \approx
    -\mathbf e_{\rm u}.
\end{equation}

Let $\chi_{\rm B}$ denote the angle between the BS-target direction and the BS-UAV direction, i.e.,
\begin{equation}\label{eq:chi_B_general}
\begin{aligned}
    \cos\chi_{\rm B}
    =
    \mathbf e_{\rm t}^{\rm T}\mathbf e_{\rm u}.
\end{aligned}
\end{equation}

\begin{proposition}[BS-near 3D coverage boundary]\label{prop:3d_bs_near_coverage}
In the BS-near region with $r_{\rm t}\ll r_{\rm u}$, the local CRLB satisfies
\begin{equation}\label{eq:bs_near_crlb_general_3d_prop}
    \mathcal C^{\rm BS}(\mathbf q_{\rm t},\mathbf q_{\rm u})
    \approx
    A_{\rm B}(\theta_{\rm t},\phi_{\rm t})r_{\rm t}^2
    +
    B_{\rm B}(\theta_{\rm t},\phi_{\rm t})r_{\rm t}^4,
\end{equation}
where
\begin{equation}\label{eq:A_B_general_3d_prop}
    A_{\rm B}(\theta_{\rm t},\phi_{\rm t})
    \triangleq
    \frac{c^2r_{\rm u}^2}
    {\xi_\tau(1-\cos\chi_{\rm B})^2},
\end{equation}

\begin{equation}\label{eq:B_B_general_3d_prop}
\begin{aligned}
    B_{\rm B}(\theta_{\rm t},\phi_{\rm t})
    &\triangleq
    \frac{r_{\rm u}^2}{\ell_x^2}
    \left[
    \frac{1-\ell_z^2}{\xi_y}
    +
    \frac{1-\ell_y^2}{\xi_z}
    \right.
    \\
    &\quad\left.
    +
    \frac{1}{(1-\cos\chi_{\rm B})^2}
    \left(
    \frac{\left(\Delta_{xy}^{\rm B}\right)^2}{\xi_y}
    +
    \frac{\left(\Delta_{xz}^{\rm B}\right)^2}{\xi_z}
    \right)
    \right].
\end{aligned}
\end{equation}
and
\begin{equation}\label{eq:bs_near_projected_factor_general}
\begin{aligned}
    \Delta_{xy}^{\rm B}
    &\triangleq
    \cos\theta_{\rm t}\cos\theta_{\rm u}
    \sin(\phi_{\rm u}-\phi_{\rm t}),
    \\
    \Delta_{xz}^{\rm B}
    &\triangleq
    \sin\theta_{\rm t}\cos\theta_{\rm u}\cos\phi_{\rm u}
    -
    \cos\theta_{\rm t}\sin\theta_{\rm u}\cos\phi_{\rm t}.
\end{aligned}
\end{equation}
The corresponding local coverage boundary is
\begin{equation}\label{eq:bs_near_boundary_general_3d_prop}
\begin{aligned}
    &r_{{\rm t},{\rm cov}}^{\rm BS}(\theta_{\rm t},\phi_{\rm t})
    \\
    &\approx
    \left(
    \frac{
    -A_{\rm B}(\theta_{\rm t},\phi_{\rm t})
    +
    \sqrt{
    A_{\rm B}^2(\theta_{\rm t},\phi_{\rm t})
    +
    4B_{\rm B}(\theta_{\rm t},\phi_{\rm t})\Gamma
    }
    }
    {2B_{\rm B}(\theta_{\rm t},\phi_{\rm t})}
    \right)^{1/2}.
\end{aligned}
\end{equation}
\end{proposition}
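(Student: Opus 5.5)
The plan is to start from the geometric decomposition \eqref{eq:crlb_geo_form} of Theorem~\ref{the:3d_crlb} rather than from the raw expression \eqref{eq:3d_crlb}. In that form each of $\mathcal C_\tau$, $\mathcal C_{\rm ang}$ and $\mathcal C_{\rm cpl}$ depends on $\mathbf q_{\rm u}$ only through $r_{\rm ut}$, $h$, $S_{xy}$ and $S_{xz}$. I will therefore substitute the BS-near approximations for these quantities term by term, keeping $(\theta_{\rm t},\phi_{\rm t})$ fixed and treating $r_{\rm t}/r_{\rm u}\to0$.

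\emph{Step 1: leading-order forms of $r_{\rm ut}$ and $h$.} Write $r_{\rm ut}=\|\mathbf q_{\rm u}-r_{\rm t}\mathbf e_{\rm t}\|=r_{\rm u}(1+O(r_{\rm t}/r_{\rm u}))$. Then $\mathbf e_{\rm ut}=-\mathbf e_{\rm u}+O(r_{\rm t}/r_{\rm u})$, so by \eqref{eq:h_definition} and \eqref{eq:chi_B_general},
\begin{equation*}
h=1+\mathbf e_{\rm t}^{\rm T}\mathbf e_{\rm ut}\approx 1-\cos\chi_{\rm B}.
\end{equation*}
This is the 3D analogue of \eqref{eq:2d_bs_near_approx}.

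\emph{Step 2: substitution into each term.}
\begin{itemize}
\item The delay term becomes $\mathcal C_\tau\approx c^2r_{\rm u}^2r_{\rm t}^2/(\xi_\tau(1-\cos\chi_{\rm B})^2)=A_{\rm B}r_{\rm t}^2$.
\item The angular term becomes $\mathcal C_{\rm ang}\approx (r_{\rm u}^2r_{\rm t}^4/\ell_x^2)\big((1-\ell_z^2)/\xi_y+(1-\ell_y^2)/\xi_z\big)$. Here $\ell_x,\ell_y,\ell_z$ depend only on the target direction and are exact.
\item For the coupling term no approximation is needed. Comparing \eqref{eq:Sxy}–\eqref{eq:Sxz} with \eqref{eq:bs_near_projected_factor_general} gives the exact identities $4S_{xy}^2=r_{\rm u}^2r_{\rm t}^2(\Delta_{xy}^{\rm B})^2$ and $4S_{xz}^2=r_{\rm u}^2r_{\rm t}^2(\Delta_{xz}^{\rm B})^2$. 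Hence
\begin{equation*}
\mathcal C_{\rm cpl}=\frac{r_{\rm t}^2}{h^2\ell_x^2}\left(\frac{4S_{xy}^2}{\xi_y}+\frac{4S_{xz}^2}{\xi_z}\right)\approx \frac{r_{\rm u}^2r_{\rm t}^4}{\ell_x^2(1-\cos\chi_{\rm B})^2}\left(\frac{(\Delta_{xy}^{\rm B})^2}{\xi_y}+\frac{(\Delta_{xz}^{\rm B})^2}{\xi_z}\right).
\end{equation*}
\end{itemize}
Adding the last two terms gives exactly $B_{\rm B}r_{\rm t}^4$ with $B_{\rm B}$ as in \eqref{eq:B_B_general_3d_prop}. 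This establishes \eqref{eq:bs_near_crlb_general_3d_prop}.

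\emph{Step 3: the boundary.} For fixed $(\theta_{\rm t},\phi_{\rm t})$, set $u=r_{\rm t}^2$. The approximate CRLB $A_{\rm B}u+B_{\rm B}u^2$ is strictly increasing on $u\ge0$, since $A_{\rm B},B_{\rm B}>0$ away from the singular directions of Remark~\ref{rem:geo_singularities}. The condition $\mathcal C^{\rm BS}\le\Gamma$ is therefore equivalent to $u\le u^\star$, where $u^\star$ is the unique nonnegative root of $B_{\rm B}u^2+A_{\rm B}u-\Gamma=0$. The quadratic formula, taking the root with the $+$ sign, followed by a square root gives \eqref{eq:bs_near_boundary_general_3d_prop}. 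This step is the same as in Proposition~\ref{prop:2d_bs_near_coverage}, which the result reduces to when $\theta_{\rm t}=\theta_{\rm u}=0$ and $\xi_z$-terms are dropped; this serves as a consistency check.

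\emph{Main obstacle.} The delicate point is justifying the replacement $h\approx1-\cos\chi_{\rm B}$ at first order. The neglected correction is $O(r_{\rm t}/r_{\rm u})$ in absolute size, but relative to $h$ it is only small when $1-\cos\chi_{\rm B}$ is not itself of order $r_{\rm t}/r_{\rm u}$. I would therefore state the approximation for directions bounded away from the BS-UAV collinearity direction $\chi_{\rm B}=0$ and from $\ell_x=0$. Near $\chi_{\rm B}=0$ the predicted boundary radius shrinks to zero, which matches the dimple seen in Fig.~\ref{fig:sim_3d_crlb_target} and the singularity identified in Remark~\ref{rem:geo_singularities}.
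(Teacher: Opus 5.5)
Your proposal is correct and follows the paper's proof in Appendix~\ref{app:bs_near_coverage} essentially step for step. Both substitute $r_{\rm ut}\approx r_{\rm u}$, $h\approx 1-\cos\chi_{\rm B}$ and the projected-area expressions into \eqref{eq:crlb_geo_form}, collect the $r_{\rm t}^2$ and $r_{\rm t}^4$ coefficients, and solve the quadratic in $r_{\rm t}^2$. You are also right that $4S_{xy}^2=r_{\rm u}^2r_{\rm t}^2(\Delta_{xy}^{\rm B})^2$ and $4S_{xz}^2=r_{\rm u}^2r_{\rm t}^2(\Delta_{xz}^{\rm B})^2$ hold exactly, where the paper writes $\approx$.

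Your closing caveat is more conservative than necessary. Expanding gives $h=1-\cos\chi_{\rm B}+(r_{\rm t}/r_{\rm u})\sin^2\chi_{\rm B}+O(r_{\rm t}^2/r_{\rm u}^2)$, and the exact $h$ vanishes identically at $\chi_{\rm B}=0$ whenever $r_{\rm t}<r_{\rm u}$. So the relative error of $h\approx1-\cos\chi_{\rm B}$ is uniformly $O(r_{\rm t}/r_{\rm u})$, and you do not need to exclude any neighborhood of the collinearity direction.
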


\begin{IEEEproof}
Please refer to Appendix~\ref{app:bs_near_coverage}.
\end{IEEEproof}


\subsubsection{UAV-Near Coverage Boundary}

For the UAV-near region, we have
\begin{equation}
    r_{\rm t}\approx r_{\rm u},
    \quad
    \mathbf e_{\rm t}
    \approx
    \mathbf e_{\rm u}.
\end{equation}
Let $\chi_{\rm U}$ denote the angle between the local UAV-target direction and the BS-UAV direction, i.e.,
\begin{equation}\label{eq:chi_U_general}
\begin{aligned}
    \cos\chi_{\rm U}
    =
    \mathbf e_{\rm ut}^{\rm T}\mathbf e_{\rm u}.
\end{aligned}
\end{equation}

\begin{proposition}[UAV-near 3D coverage boundary]\label{prop:3d_uav_near_coverage}
In the UAV-near region with $r_{\rm ut}\ll r_{\rm u}$, the local CRLB satisfies
\begin{equation}\label{eq:uav_near_crlb_general_3d_prop}
    \mathcal C^{\rm UAV}(\mathbf q_{\rm t},\mathbf q_{\rm u})
    \approx
    G_{\rm U}(\theta_{\rm ut},\phi_{\rm ut})r_{\rm ut}^2,
\end{equation}
where
\begin{equation}\label{eq:G_U_general_3d_prop}
\begin{aligned}
    &G_{\rm U}(\theta_{\rm ut},\phi_{\rm ut})
    \triangleq
    \frac{c^2r_{\rm u}^2}
    {\xi_\tau(1+\cos\chi_{\rm U})^2}
    +
    \frac{r_{\rm u}^4}{\ell_{x,{\rm u}}^2}
    \left[
    \frac{1-\ell_{z,{\rm u}}^2}{\xi_y}\right.
    \\
    &\quad+
    \frac{1-\ell_{y,{\rm u}}^2}{\xi_z}
    \left.
    +
    \frac{1}{(1+\cos\chi_{\rm U})^2}
    \left(
    \frac{\left(\Delta_{xy}^{\rm U}\right)^2}{\xi_y}
    +
    \frac{\left(\Delta_{xz}^{\rm U}\right)^2}{\xi_z}
    \right)
    \right].
\end{aligned}
\end{equation}
\begin{equation}\label{eq:uav_near_projected_factor_general}
\begin{aligned}
    \Delta_{xy}^{\rm U}
    &\triangleq
    \cos\theta_{\rm ut}\cos\theta_{\rm u}
    \sin(\phi_{\rm u}-\phi_{\rm ut}),
    \\
    \Delta_{xz}^{\rm U}
    &\triangleq
    \sin\theta_{\rm ut}\cos\theta_{\rm u}\cos\phi_{\rm u}
    -
    \cos\theta_{\rm ut}\sin\theta_{\rm u}\cos\phi_{\rm ut}.
\end{aligned}
\end{equation}
and $\ell_{x,{\rm u}}=\cos\theta_{\rm u}\cos\phi_{\rm u}$, $\ell_{y,{\rm u}}=\cos\theta_{\rm u}\sin\phi_{\rm u}$, $\ell_{z,{\rm u}}=\sin\theta_{\rm u}$.
The corresponding local coverage boundary is
\begin{equation}\label{eq:uav_near_boundary_general_3d_prop}
    r_{{\rm ut},{\rm cov}}^{\rm U}(\theta_{\rm ut},\phi_{\rm ut})
    \approx
    \left(
    \frac{\Gamma}
    {G_{\rm U}(\theta_{\rm ut},\phi_{\rm ut})}
    \right)^{1/2}.
\end{equation}
\end{proposition}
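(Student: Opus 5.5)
The plan is to start from the geometric form \eqref{eq:crlb_geo_form} of Theorem~\ref{the:3d_crlb} rather than from \eqref{eq:3d_crlb}. In that form every dependence on the target enters only through $r_{\rm t}$, $r_{\rm ut}$, $h$, the direction cosines $\ell_x,\ell_y,\ell_z$ of $\mathbf e_r$, and the projected areas $S_{xy},S_{xz}$. I will write the target locally as $\mathbf q_{\rm t}=\mathbf q_{\rm u}+r_{\rm ut}\mathbf e_{\rm ut}$, where $\mathbf e_{\rm ut}=[\cos\theta_{\rm ut}\cos\phi_{\rm ut},\cos\theta_{\rm ut}\sin\phi_{\rm ut},\sin\theta_{\rm ut}]^{\rm T}$. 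I then keep the leading order of each quantity as $r_{\rm ut}/r_{\rm u}\to0$, mirroring the 2D argument behind Proposition~\ref{prop:2d_uav_near_coverage}.

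The zeroth-order approximations are straightforward. Since $r_{\rm t}=r_{\rm u}+O(r_{\rm ut})$ and $\mathbf e_r=\mathbf e_{\rm u}+O(r_{\rm ut}/r_{\rm u})$, we get $h=1+\mathbf e_r^{\rm T}\mathbf e_{\rm ut}\approx 1+\cos\chi_{\rm U}$. We also get $\ell_x\approx\ell_{x,{\rm u}}$, $\ell_y\approx\ell_{y,{\rm u}}$ and $\ell_z\approx\ell_{z,{\rm u}}$. Substituting these:
\begin{itemize}
\item $\mathcal C_\tau=c^2r_{\rm ut}^2r_{\rm t}^2/(\xi_\tau h^2)\approx c^2r_{\rm u}^2r_{\rm ut}^2/(\xi_\tau(1+\cos\chi_{\rm U})^2)$, which gives the first term of $G_{\rm U}$;
\item $\mathcal C_{\rm ang}\approx r_{\rm ut}^2 r_{\rm u}^4\ell_{x,{\rm u}}^{-2}\big((1-\ell_{z,{\rm u}}^2)/\xi_y+(1-\ell_{y,{\rm u}}^2)/\xi_z\big)$, which gives the next bracketed terms.
\end{itemize}

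The coupling term needs one extra step, because $S_{xy}$ and $S_{xz}$ vanish at $r_{\rm ut}=0$. The cross product gives $\mathbf q_{\rm t}\times\mathbf q_{\rm u}=r_{\rm ut}\,\mathbf e_{\rm ut}\times\mathbf q_{\rm u}$ exactly, since $\mathbf q_{\rm u}\times\mathbf q_{\rm u}=0$. Hence $2S_{xy}=r_{\rm ut}r_{\rm u}|\mathbf e_z^{\rm T}(\mathbf e_{\rm ut}\times\mathbf e_{\rm u})|$, with no approximation needed. Evaluating this component in spherical coordinates, exactly as in \eqref{eq:Sxy}–\eqref{eq:Sxz} with $(\theta_{\rm t},\phi_{\rm t})$ replaced by $(\theta_{\rm ut},\phi_{\rm ut})$, yields $4S_{xy}^2=r_{\rm ut}^2r_{\rm u}^2(\Delta_{xy}^{\rm U})^2$ and $4S_{xz}^2=r_{\rm ut}^2r_{\rm u}^2(\Delta_{xz}^{\rm U})^2$. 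Inserting these into $\mathcal C_{\rm cpl}=4r_{\rm t}^2(S_{xy}^2/\xi_y+S_{xz}^2/\xi_z)/(h^2\ell_x^2)$ with $r_{\rm t}\approx r_{\rm u}$, $h\approx1+\cos\chi_{\rm U}$ and $\ell_x\approx\ell_{x,{\rm u}}$ produces the last term of $G_{\rm U}$ times $r_{\rm ut}^2$.

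Summing the three pieces gives \eqref{eq:uav_near_crlb_general_3d_prop}, with all higher-order corrections being $O(r_{\rm ut}^3)$. Setting $G_{\rm U}r_{\rm ut}^2=\Gamma$ and solving for $r_{\rm ut}$ gives \eqref{eq:uav_near_boundary_general_3d_prop}.

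The main obstacle is justifying the leading-order truncation uniformly in direction. Near $\cos\chi_{\rm U}=-1$, i.e., $h\to0$ at the cusp pointing toward the BS, the $O(r_{\rm ut}/r_{\rm u})$ correction to $h$ is not negligible relative to $h$ itself. The approximation is therefore only claimed to first order for fixed directions with $1+\cos\chi_{\rm U}$ bounded away from zero, and for $\ell_{x,{\rm u}}\neq0$, consistent with the meaning of $\approx$ declared at the start of Section~\ref{sec:sensing_coverage}. I would state this caveat explicitly rather than attempt a uniform bound.

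The other point needing care is the sign convention in $\Delta^{\rm U}$. Since only squares appear, I only need to check that the components of $\mathbf e_{\rm ut}\times\mathbf e_{\rm u}$ match the given expressions up to sign, which follows by direct expansion.
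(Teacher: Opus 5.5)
Your proposal is correct and is essentially the paper's own argument in Appendix~\ref{app:uav_near_coverage}. Like the paper, you substitute $r_{\rm t}\approx r_{\rm u}$, $h\approx 1+\cos\chi_{\rm U}$, $(\ell_x,\ell_y,\ell_z)\approx(\ell_{x,{\rm u}},\ell_{y,{\rm u}},\ell_{z,{\rm u}})$ and $4S_{xy}^2=r_{\rm u}^2r_{\rm ut}^2(\Delta_{xy}^{\rm U})^2$ (likewise for $S_{xz}$) into \eqref{eq:crlb_geo_form}, then solve $G_{\rm U}r_{\rm ut}^2=\Gamma$. Your identity $\mathbf q_{\rm t}\times\mathbf q_{\rm u}=r_{\rm ut}\,\mathbf e_{\rm ut}\times\mathbf q_{\rm u}$ is a small sharpening, since it shows the projected areas are exact where the paper writes $\approx$. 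Your cusp caveat, however, is more cautious than needed. With $\epsilon=r_{\rm ut}/r_{\rm u}$ one has exactly $h=1+(\cos\chi_{\rm U}+\epsilon)/\sqrt{1+2\epsilon\cos\chi_{\rm U}+\epsilon^2}=(1+\cos\chi_{\rm U})\bigl(1+\epsilon(1-\cos\chi_{\rm U})+O(\epsilon^2)\bigr)$, so the correction to $h$ vanishes in proportion to $1+\cos\chi_{\rm U}$ and the leading-order approximation holds with relative error $O(r_{\rm ut}/r_{\rm u})$ uniformly in direction, requiring only $\ell_{x,{\rm u}}\neq0$.
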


\begin{IEEEproof}
Please refer to Appendix~\ref{app:uav_near_coverage}.
\end{IEEEproof}



\subsection{CRLB-Constrained Coverage Size}\label{subsec:coverage_size}

Beyond the local boundary shape, the CRLB-constrained coverage can also be quantified by its area or volume. This provides a direct placement metric for UAV-assisted ISAC. 
Based on \eqref{eq:general_crlb_coverage}, in the two-dimensional case, the corresponding coverage area is
\begin{equation}
    A_{\rm cov}(\mathbf q_{\rm u},\Gamma)
    =
    \int_{\mathcal D}
    \mathbb I\left\{
    \mathcal C_{\rm 2D}(\mathbf q_{\rm t},\mathbf q_{\rm u})\le \Gamma
    \right\}
    d\mathbf q_{\rm t},
\end{equation}
where $\mathcal D$ denotes the target evaluation region and $\mathbb I\{\cdot\}$ is the indicator function. Similarly, in the 3D case, the coverage volume is given by
\begin{equation}
    V_{\rm cov}(\mathbf q_{\rm u},\Gamma)
    =
    \int_{\mathcal D}
    \mathbb I\left\{
    \mathcal C_{\rm 3D}(\mathbf q_{\rm t},\mathbf q_{\rm u})\le \Gamma
    \right\}
    d\mathbf q_{\rm t}.
\end{equation}

For UAV-proximal sensing tasks, the total coverage size may not fully reflect the useful sensing region around the UAV. Therefore, we further define the UAV-side coverage as the subset of covered target points that are closer to the UAV than to the BS. In the two-dimensional case, it is given by
\begin{equation}
    \mathcal A_{\rm UAV}^{\rm side}(\mathbf q_{\rm u},\Gamma)
    =
    \left\{
    \mathbf q_{\rm t}\in\mathcal A_{\rm cov}(\mathbf q_{\rm u},\Gamma):
    \|\mathbf q_{\rm t}-\mathbf q_{\rm u}\|<\|\mathbf q_{\rm t}\|
    \right\},
\end{equation}
and its area is
\begin{equation}
    A_{\rm UAV}^{\rm side}(\mathbf q_{\rm u},\Gamma)
    =
    \int_{\mathcal D}
    \mathbb I\left\{
    \mathbf q_{\rm t}\in\mathcal A_{\rm UAV}^{\rm side}(\mathbf q_{\rm u},\Gamma)
    \right\}
    d\mathbf q_{\rm t}.
\end{equation}
The 3D UAV-side coverage volume $V_{\rm UAV}^{\rm side}(\mathbf q_{\rm u},\Gamma)$ can be defined in the same manner by replacing the area measure with the volume measure. This geometric partition is used instead of connected-component decomposition, because it remains well defined even when the BS-near and UAV-near coverage components merge into a single connected region.

The above definitions also lead to a coverage-oriented UAV placement criterion. For example, the UAV placement that maximizes the total CRLB-constrained coverage can be formulated as
\begin{equation}
    \mathbf q_{\rm u}^{\rm cov}
    \in
    \arg\max_{\mathbf q_{\rm u}\in\mathcal Q}
    A_{\rm cov}(\mathbf q_{\rm u},\Gamma)
\end{equation}
in the two-dimensional case, or with $V_{\rm cov}(\mathbf q_{\rm u},\Gamma)$ in the 3D case. If the mission focuses on local sensing around the UAV, $A_{\rm UAV}^{\rm side}(\mathbf q_{\rm u},\Gamma)$ or $V_{\rm UAV}^{\rm side}(\mathbf q_{\rm u},\Gamma)$ can be used instead. The total coverage size and the UAV-side coverage size correspond to two different placement objectives. The former is suitable for maximizing the overall reliable sensing region, while the latter is more relevant when the UAV is expected to approach and sense a local region of interest. Hence, the same CRLB expression can support different UAV placement criteria depending on the sensing mission.

\begin{figure*}[t]
    \centering
    \subfloat[Selected UAV positions.]{
        \includegraphics[width=0.31\textwidth]{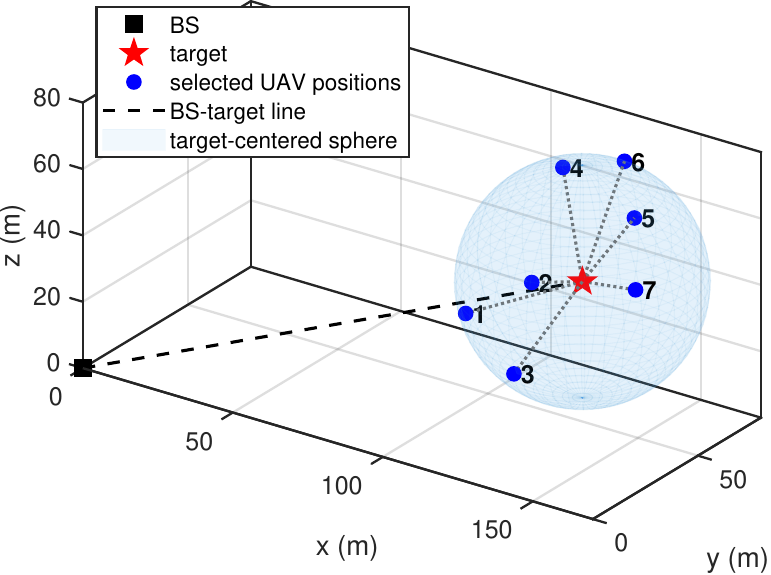}
        \label{fig:sim_exp6_sphere_geometry}}
    \hfill
    \subfloat[ML MSE and OFDM-CRLB.]{
        \includegraphics[width=0.31\textwidth]{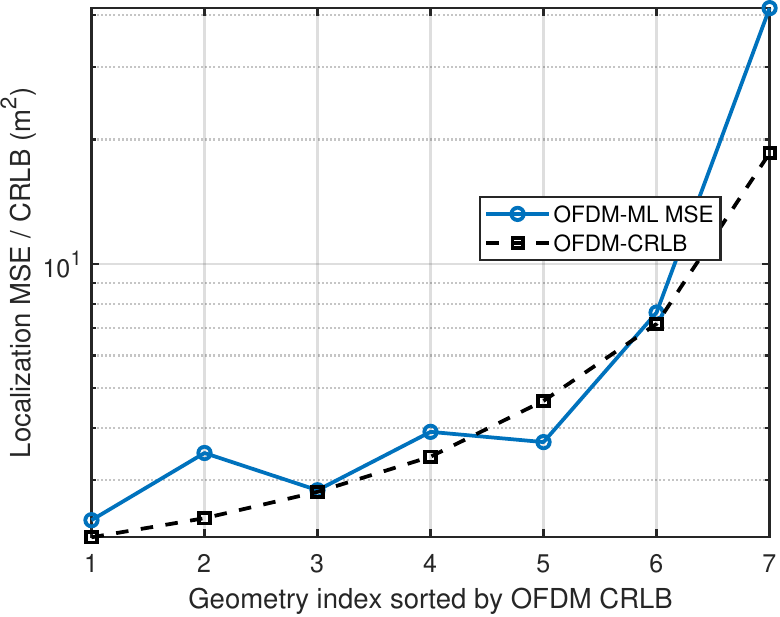}
        \label{fig:sim_exp6_sphere_mse_crlb}}
    \hfill
    \subfloat[CRLB components.]{
        \includegraphics[width=0.31\textwidth]{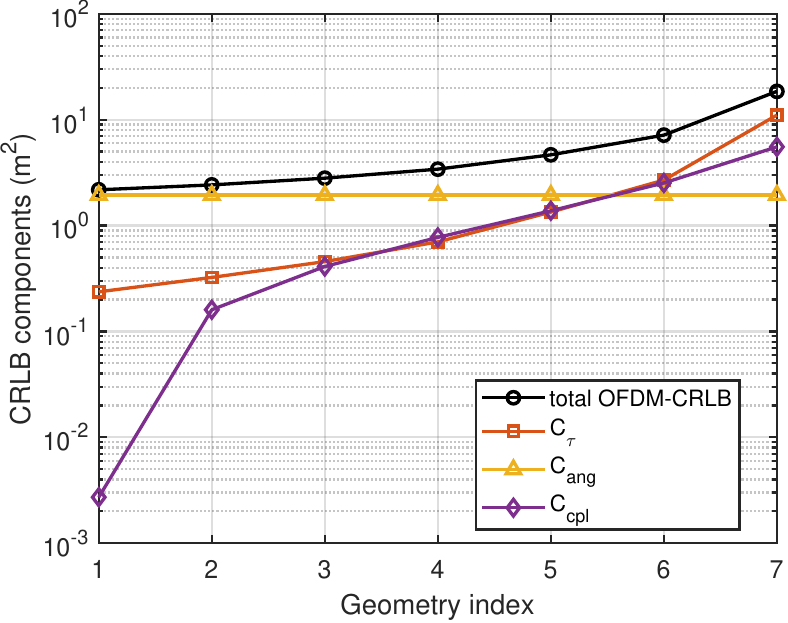}
        \label{fig:sim_exp6_sphere_decomp}}
    \caption{OFDM-ML validation under fixed target and fixed UAV-target distance. The UAV positions are selected on a target-centered sphere and sorted by OFDM-CRLB, so that the received sensing SNR is fixed while the 3D geometry changes.}
    \label{fig:sim_exp6_sphere}
\end{figure*}

\section{OFDM Specialization and Numerical Validation}\label{sec:verification}

In this section, we first specialize the derived closed-form CRLB to a representative OFDM-based ISAC waveform, and then provide numerical results to validate the analysis and illustrate the general sensing-coverage behavior. Unless otherwise stated, the simulation parameters used throughout this section are listed in Table~\ref{tab:sim_params}. To validate the analytical CRLB in Theorem \ref{the:3d_crlb}, we specialize the results by considering OFDM as the basic sensing waveform, and construct the corresponding ML sensing algorithm, which asymptotically achieves the CRLB. Specifically, consider an OFDM sensing waveform with $K$ subcarriers, $M$ OFDM symbols, subcarrier spacing $\Delta f$, and uniform power allocation scheme $P=P_{\rm t}/K$. The transmitted resource-element symbols are assumed to be known at the BS for sensing processing. Let $\sigma^2=N_0\Delta f$ denote the noise power on each subcarrier. The sensing SNR on each resource element is
\begin{equation}
    \gamma_{{\rm s},{\rm RE}}
    =
    \frac{P|\alpha_{\rm t}|^2N_{\rm R}}{\sigma^2}.
\end{equation}
Since the useful sensing observation contains $MK$ resource elements, the accumulated sensing observation SNR is
\begin{equation}
    \Upsilon_{\rm s}^{\rm OFDM}
    =
    MK\gamma_{{\rm s},{\rm RE}}.
\end{equation}
For equal-energy subcarriers, the RMS bandwidth of the OFDM waveform is
\begin{equation}
    \beta_{\rm rms,OFDM}^2
    =
    \frac{(K^2-1)\Delta f^2}{12}.
\end{equation}
Accordingly, the delay Fisher information becomes
\begin{equation}
\begin{aligned}
    I_{\tau_{\rm t}}^{\rm OFDM}
    &=
    8\pi^2\beta_{\rm rms,OFDM}^2
    \Upsilon_{\rm s}^{\rm OFDM}\\
    &=
    \frac{2\pi^2}{3}
    MK(K^2-1)\Delta f^2\gamma_{{\rm s},{\rm RE}}.
\end{aligned}
\end{equation}
The angular Fisher information is obtained by replacing $\Upsilon_{\rm s}$ in Lemma~\ref{lem:angular_fim} with $\Upsilon_{\rm s}^{\rm OFDM}$. Therefore, the OFDM-based CRLB is obtained by substituting the corresponding OFDM information coefficients into $\xi_\tau$, $\xi_y$, and $\xi_z$, while the geometry-dependent factors $h$, $\ell_x$, $S_{xy}$, and $S_{xz}$ remain the same. In the following simulations, this OFDM-specialized CRLB is compared with the localization MSE obtained by an ML estimator with local Cartesian search.

\subsection{CRLB Verification}
To isolate the geometry-dependent effect from the received sensing SNR, Fig.~\ref{fig:sim_exp6_sphere} fixes the target location and the UAV-target distance, and selects UAV positions on a sphere centered at the target. The number of subcarriers, symbols and subcarrier spacing are set as $K=64$, $M=16$ and $\Delta f=120$ kHz, respectively. The selected UAV positions are sorted by the resulting OFDM-CRLB. Since both $r_{\rm t}$ and $r_{\rm ut}$ are fixed, the accumulated sensing SNR remains unchanged across these geometries. Nevertheless, the CRLB varies substantially, which indicates that the 3D bistatic geometry alone can significantly reshape the localization bound. In particular, the last selected geometry is close to the bistatic collinearity singularity, where the factor $h$ becomes small and the CRLB increases sharply. The ML MSE follows the CRLB trend for the regular geometries, while the near-singular geometry leads to a very large theoretical bound. Fig.~\ref{fig:sim_exp6_sphere_decomp} further explains this behavior through the CRLB decomposition. In the sphere sweep, the angular term $C_{\rm ang}$ remains almost unchanged because the target direction and the UAV-target distance are fixed. By contrast, the delay-related term $C_\tau$ increases rapidly as the geometry approaches the small-$h$ condition, and the coupling term $C_{\rm cpl}$ also becomes more pronounced near the ill-conditioned geometry. 

Fig.~\ref{fig:sim_exp6_traj} further considers a more practical straight-line UAV trajectory. Along this trajectory, the UAV moves closer to the target while maintaining a nonzero angle with the BS-target link. Therefore, both the accumulated sensing SNR and the bistatic geometry vary with the trajectory index. The OFDM-CRLB decreases along the trajectory, and the ML MSE follows the same overall trend with a moderate gap. This shows that the proposed CRLB remains predictive under a practical UAV movement pattern where propagation loss and geometry are coupled. The decomposition in Fig.~\ref{fig:sim_exp6_traj_decomp} shows a different mechanism from the sphere sweep. Along the straight-line trajectory, the bistatic factor $h$ stays away from the singular condition, and the coupling term $C_{\rm cpl}$ remains relatively small. The CRLB reduction is therefore mainly attributed to the decreasing UAV-target distance and the corresponding reductions of the delay and angular contributions.

\begin{figure*}[t]
    \centering
    \subfloat[Straight-line UAV trajectory.]{
        \includegraphics[width=0.31\textwidth]{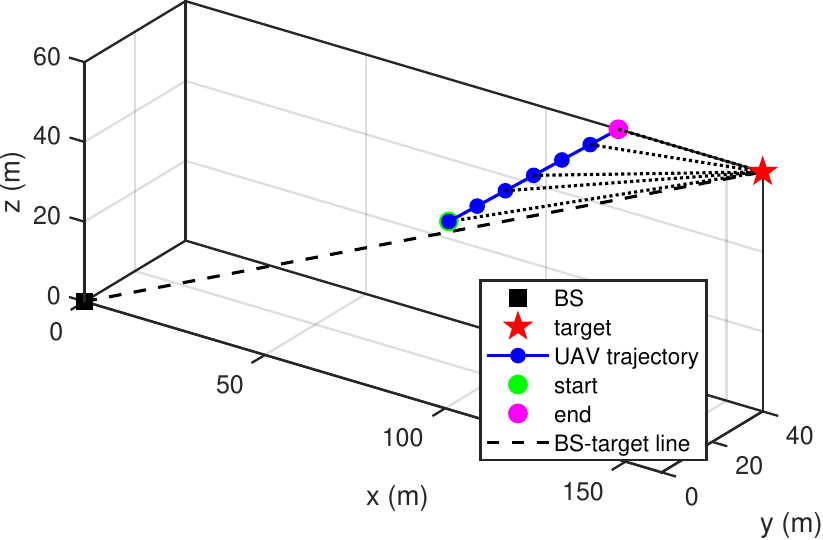}
        \label{fig:sim_exp6_traj_geometry}}
    \hfill
    \subfloat[ML MSE and OFDM-CRLB.]{
        \includegraphics[width=0.31\textwidth]{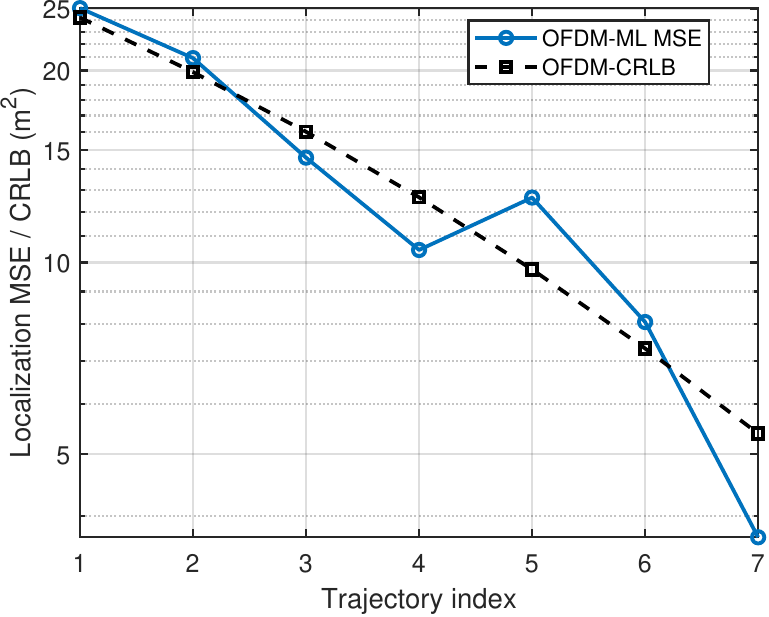}
        \label{fig:sim_exp6_traj_mse_crlb}}
    \hfill
    \subfloat[CRLB components.]{
        \includegraphics[width=0.31\textwidth]{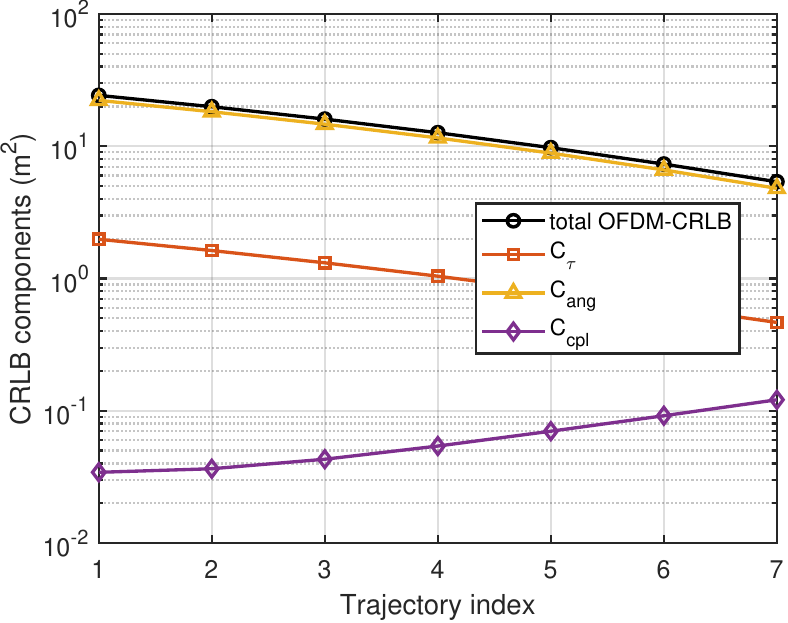}
        \label{fig:sim_exp6_traj_decomp}}
    \caption{OFDM-ML validation along a straight-line UAV trajectory. Along this more practical UAV movement pattern, both the received sensing SNR and the bistatic geometry vary, and the ML MSE follows the placement-dependent OFDM-CRLB trend.}
    \label{fig:sim_exp6_traj}
\end{figure*}

\subsection{CRLB-constrained coverage}
To complement the above CRLB-constrained coverage analysis, we provide a numerical illustration of the exact coverage boundary and its local approximation. The exact coverage is obtained by directly evaluating the closed-form CRLB over the target region, while the local boundaries are generated from Propositions~\ref{prop:2d_bs_near_coverage}-\ref{prop:3d_uav_near_coverage}. 
\begin{figure}[t]
    \centering
    \subfloat[2D boundary, $\Gamma=1$ m$^2$.]{
        \includegraphics[width=0.45\columnwidth]{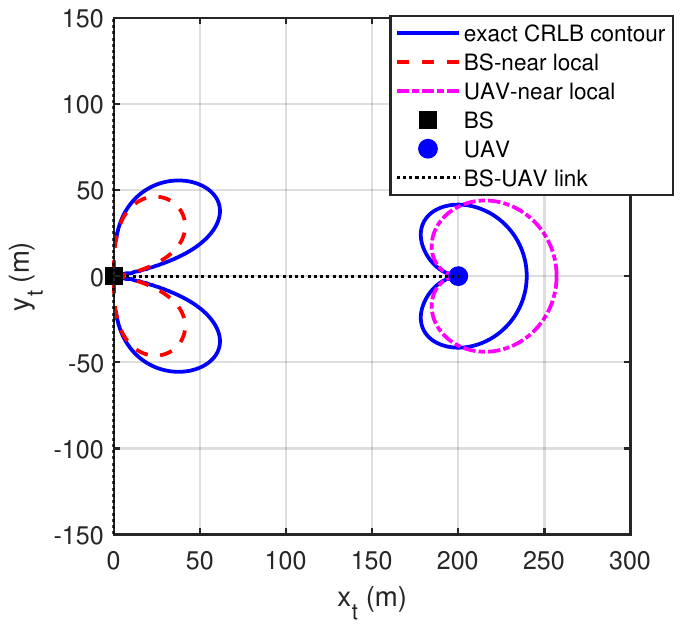}
        \label{fig:coverage_approx_2d}}
    \hfill
    \subfloat[3D boundary, $\Gamma=1$ m$^2$.]{
        \includegraphics[width=0.45\columnwidth]{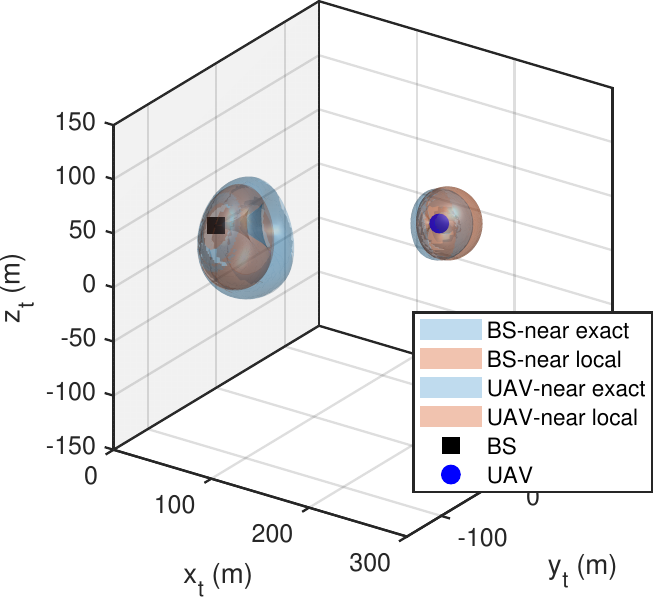}
        \label{fig:coverage_approx_3d}}
    \caption{Representative comparisons between the exact CRLB-constrained coverage boundary and the corresponding local approximation in the two-dimensional and 3D cases.}
    \label{fig:coverage_approx_comparison}
\end{figure}
As shown in Fig.~\ref{fig:coverage_approx_comparison}, the local formulas capture the dominant geometric shapes of the exact CRLB level sets around their corresponding expansion points. In the two-dimensional case, the BS-near boundary is squeezed by the BS-UAV collinearity direction and the array endfire direction, while the UAV-near boundary forms a heart-like region with a cusp toward the BS, which is consistent with the factor $1+\cos(\beta-\phi_{\rm u})$ in Proposition~\ref{prop:2d_uav_near_coverage}. In the 3D case, the BS-near surface exhibits a dimpled-cap-shaped body, where the indentation is aligned with the bistatic collinearity direction. This agrees with the role of $1-\cos\chi_{\rm B}$ in Proposition~\ref{prop:3d_bs_near_coverage}. In addition, the UAV-near surface exhibits a cardioid-like body with a cusp toward the BS, which is governed by the factor $1+\cos\chi_{\rm U}$ in Proposition~\ref{prop:3d_uav_near_coverage}. The results show that the local approximations provide interpretable geometric descriptions of the exact CRLB-constrained coverage.

We further illustrate the CRLB-constrained coverage size induced by different UAV placements. Fig.~\ref{fig:sim_exp7_2d} first shows the two-dimensional coverage-size heatmaps when $\Gamma=1$ m$^2$. The total coverage area has a broader high-value region, whereas the UAV-side coverage area is more concentrated near the UAV-proximal sensing region. Their maximizing UAV locations are different, indicating that global coverage and UAV-side coverage correspond to different placement objectives.

\begin{figure}[t]
    \centering
    \subfloat[Total coverage area.]{
        \includegraphics[width=0.45\columnwidth]{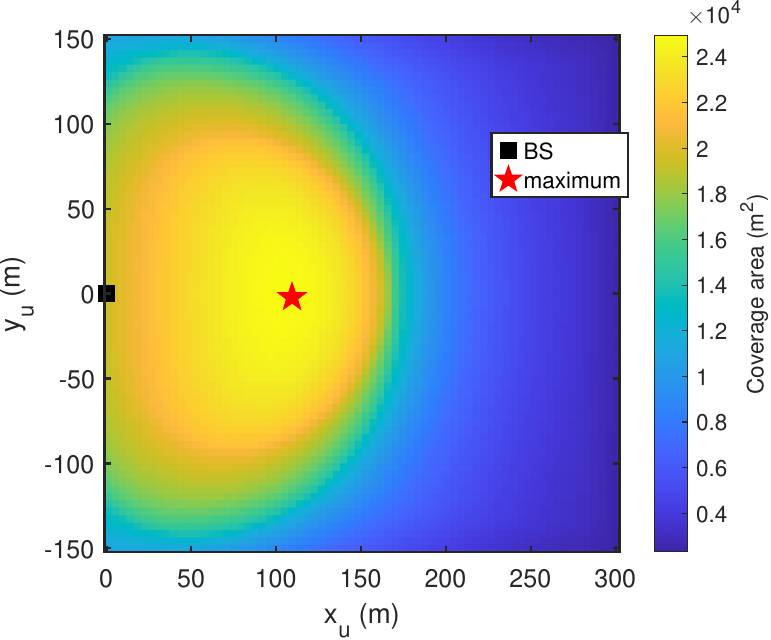}
        \label{fig:sim_exp7_2d_total}}
    \hfill
    \subfloat[UAV-side coverage area.]{
        \includegraphics[width=0.45\columnwidth]{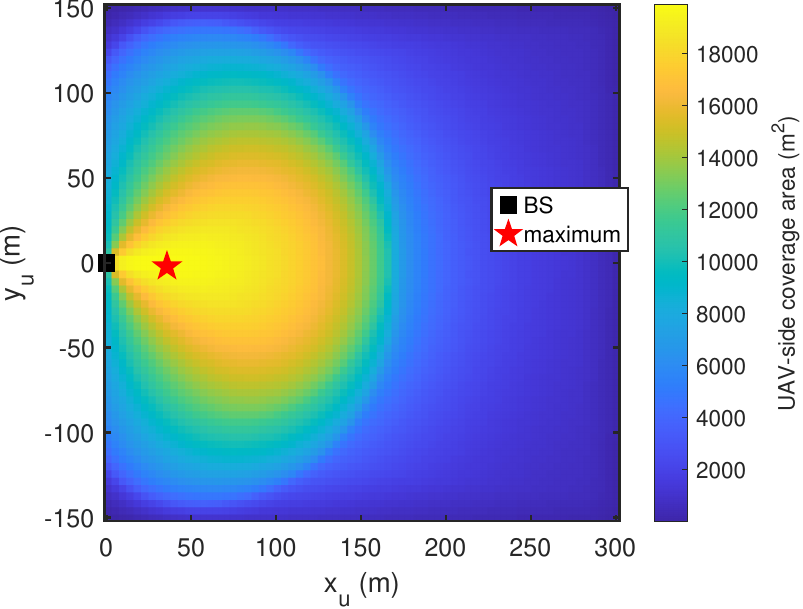}
        \label{fig:sim_exp7_2d_uavside}}
    \caption{Two-dimensional CRLB-constrained coverage area under different UAV placements.}
    \label{fig:sim_exp7_2d}
\end{figure}

To more clearly show the variation with respect to UAV displacement, Fig.~\ref{fig:sim_exp7_2d_curves} plots the coverage area along $y_{\rm u}=0$ under different CRLB thresholds. The curves exhibit a non-monotonic trend: the coverage first increases as the UAV moves away from the BS, reaches a finite-distance maximum, and then decreases when the UAV moves farther away. A looser threshold enlarges the coverage area and shifts the useful range outward. Moreover, the total-area optimum appears at a larger $x_{\rm u}$ than the UAV-side optimum.

\begin{figure}[t]
    \centering
    \subfloat[Total coverage area.]{
        \includegraphics[width=0.45\columnwidth]{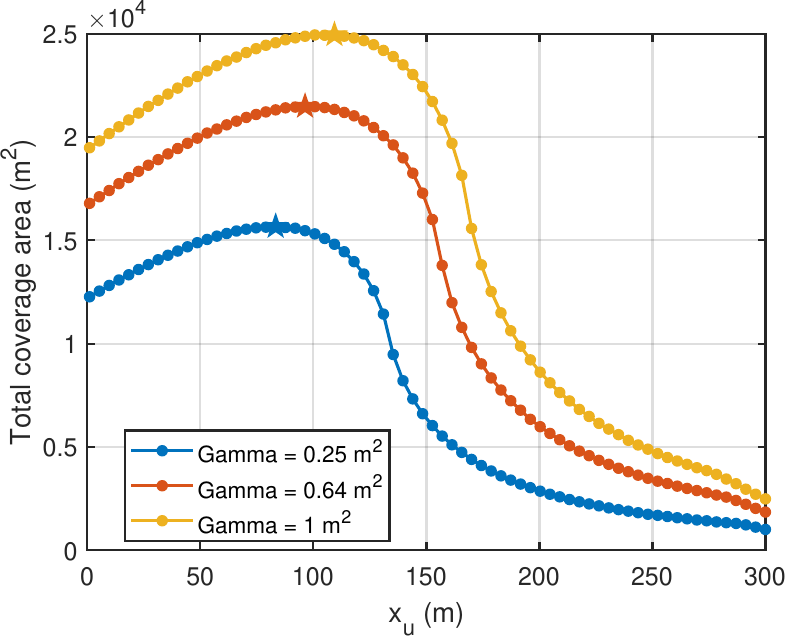}
        \label{fig:sim_exp7_2d_total_curve}}
    \subfloat[UAV-side coverage area.]{
        \includegraphics[width=0.45\columnwidth]{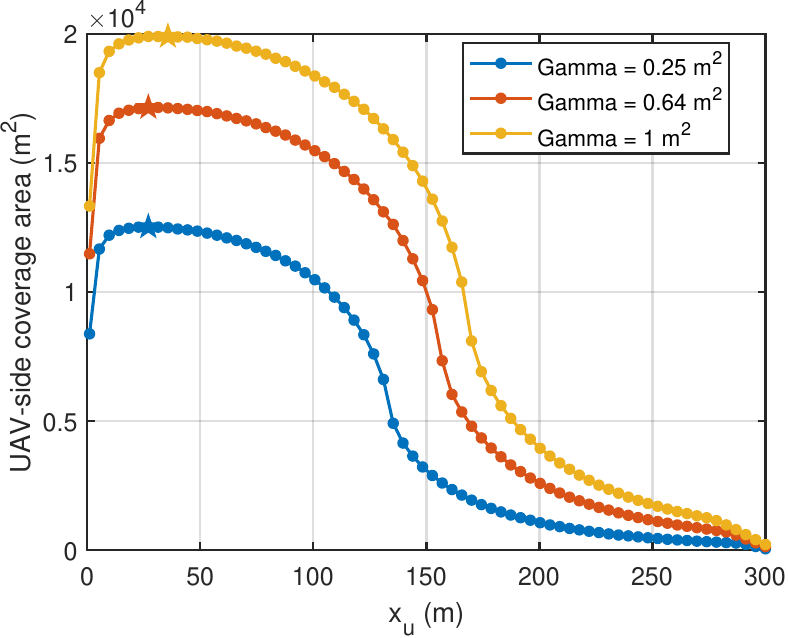}
        \label{fig:sim_exp7_2d_uavside_curve}}
    \caption{Two-dimensional CRLB-constrained coverage area versus $x_{\rm u}$ along $y_{\rm u}=0$ under different CRLB thresholds.}
    \label{fig:sim_exp7_2d_curves}
\end{figure}

Fig.~\ref{fig:sim_exp7_3d_curves} further shows the 3D coverage volume along $y_{\rm u}=0$ with a fixed $\Gamma=1$ m$^2$. The total coverage volume is also non-monotonic in $x_{\rm u}$, but its peak value and maximizing location vary with altitude. Increasing the UAV altitude generally reduces the achievable volume and shifts the preferred horizontal placement. The UAV-side volume is even more sensitive to altitude: the high-volume region shrinks at larger altitude, showing that excessive altitude weakens UAV-proximal sensing coverage significantly. It should be noted that this altitude-dependent trend is scenario-specific rather than a general property. In the considered setup, the BS is located at the origin and the target evaluation region is the whole space. Hence, increasing $z_{\rm u}$ enlarges the UAV-target distance for most candidate target points and leads to a smaller coverage volume. For a sensing region of interest located at a higher altitude, or for a different UAV feasible region, an intermediate or higher UAV altitude may become preferable. Therefore, Fig.~\ref{fig:sim_exp7_3d_curves} should be interpreted as a coverage-oriented placement insight under the considered setup, rather than as a universal claim that the coverage volume always decreases with UAV altitude.

\begin{figure}[t]
    \centering
    \subfloat[Total coverage volume.]{
        \includegraphics[width=0.45\columnwidth]{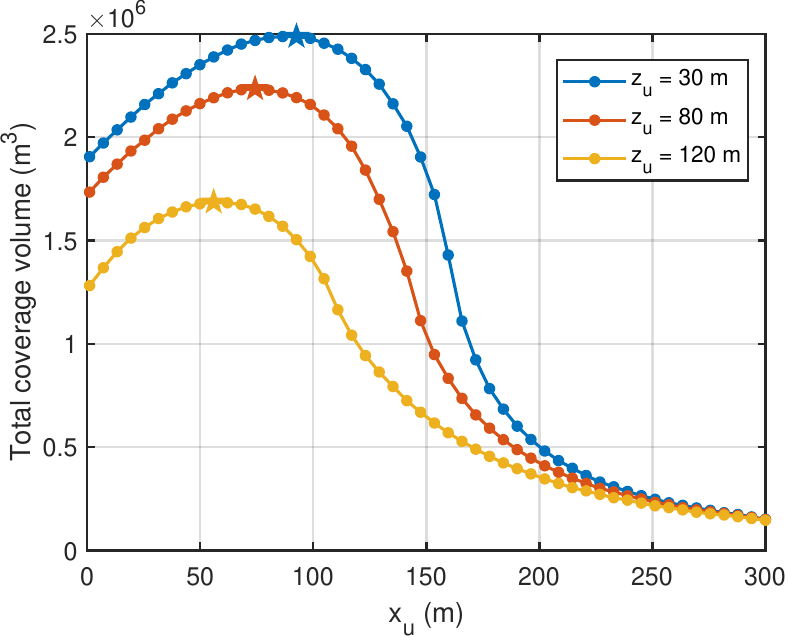}
        \label{fig:sim_exp7_3d_total_curve}}
    \subfloat[UAV-side coverage volume.]{
        \includegraphics[width=0.45\columnwidth]{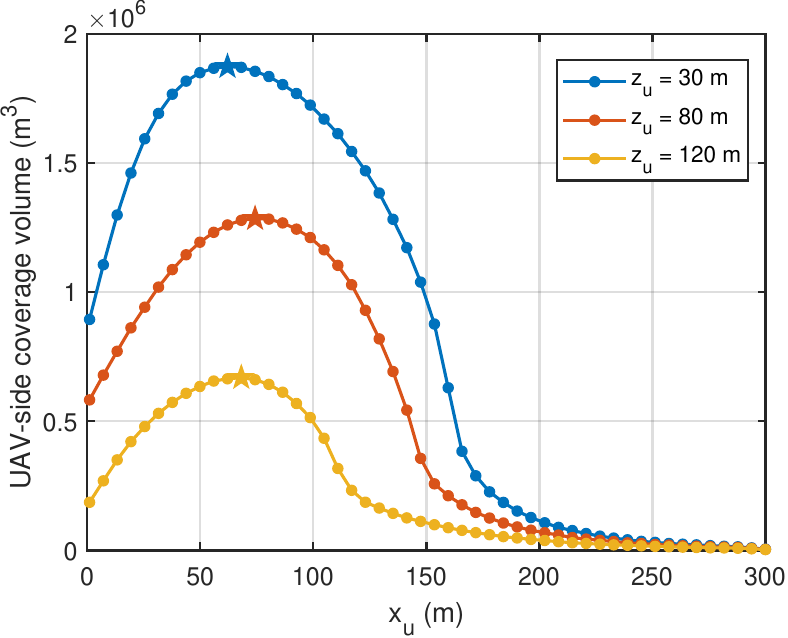}
        \label{fig:sim_exp7_3d_uavside_curve}}
    \caption{3D CRLB-constrained coverage volume versus $x_{\rm u}$ along $y_{\rm u}=0$ under different UAV altitudes.}
    \label{fig:sim_exp7_3d_curves}
\end{figure}

\section{Conclusion}\label{sec:conclusion}
This paper investigated the fundamental performance limits of 3D target localization in UAV-assisted bistatic ISAC systems. A closed-form 3D localization CRLB was derived as an explicit function of both the UAV anchor location and the target position. The derived expression decomposes the localization error bound into three components associated with propagation delay, angular measurements, and their coupling effect. The CRLB was validated through an OFDM-based ISAC waveform with ML estimation under different UAV deployments, showing that the analytical bound accurately captures placement-dependent localization trends. Based on the closed-form CRLB, we introduced and analyzed CRLB-constrained sensing coverage, which characterizes the spatial region where a prescribed localization-accuracy requirement can be guaranteed. Through local boundary approximations and coverage area/volume evaluations, we showed how UAV displacement, altitude, and the CRLB threshold reshape the reliable sensing region. These results provide insights for 3D localization-oriented UAV-assisted ISAC design.



\appendices

\section{Proof of Lemma~\ref{lem:angular_fim}}
\label{app:angle_information}
For the $(n_y,n_z)$-th antenna element, $0\le n_y\le N_y-1$ and $0\le n_z\le N_z-1$, the steering-vector entry is
\begin{equation}
a_{n_y,n_z}=e^{-j\pi(n_y\cos\theta_{\rm t}\sin\phi_{\rm t}
+n_z\sin\theta_{\rm t})}.
\end{equation}
For $\eta\in\{\theta_{\rm t},\phi_{\rm t}\}$, its derivative can be written as
\begin{equation}
\frac{\partial a_{n_y,n_z}}{\partial\eta}
=-j\pi w_{\eta,n_y,n_z}a_{n_y,n_z},
\end{equation}
where
\begin{equation}
w_{\theta_{\rm t},n_y,n_z}=
-n_y\sin\theta_{\rm t}\sin\phi_{\rm t}
+n_z\cos\theta_{\rm t},
\end{equation}
\begin{equation}
w_{\phi_{\rm t},n_y,n_z}=
n_y\cos\theta_{\rm t}\cos\phi_{\rm t}.
\end{equation}

Since $\alpha_{\rm t}$ is unknown, the component of $\partial\mathbf a/\partial\eta$ parallel to $\mathbf a$ cannot be distinguished from a change of the unknown complex gain. Thus, the effective angular Fisher information is 
\begin{equation} 
[\mathbf I_{\rm ang}]_{ij} = \frac{2P_{\rm t}T_{\rm obs}|\alpha_{\rm t}|^2}{N_0} \Re\left\{ \frac{\partial\mathbf a^{\rm H}}{\partial\eta_i} \mathbf P_{\mathbf a}^{\perp} \frac{\partial\mathbf a}{\partial\eta_j} \right\}, 
\label{eq:angular_fim_projection} 
\end{equation} 
where $\eta_i,\eta_j\in\{\theta_{\rm t},\phi_{\rm t}\}$ and 
\begin{equation} 
\mathbf P_{\mathbf a}^{\perp} = \mathbf I - \frac{\mathbf a\mathbf a^{\rm H}}{\|\mathbf a\|^2}. \end{equation}

The projection is equivalent to replacing the antenna indices by their centered versions. Therefore,
\begin{equation}
\begin{aligned}
&\frac{\partial\mathbf a^{\rm H}}{\partial\eta_i}
\mathbf P_{\mathbf a}^{\perp}
\frac{\partial\mathbf a}{\partial\eta_j}\\
&=\pi^2N_{\rm R}\left[
\frac{N_y^2-1}{12}u_{\eta_i}u_{\eta_j}
+\frac{N_z^2-1}{12}v_{\eta_i}v_{\eta_j}
\right],
\end{aligned}
\end{equation}
where $u_{\theta_{\rm t}}=-\sin\theta_{\rm t}\sin\phi_{\rm t}$, $v_{\theta_{\rm t}}=\cos\theta_{\rm t}$, $u_{\phi_{\rm t}}=\cos\theta_{\rm t}\cos\phi_{\rm t}$, and $v_{\phi_{\rm t}}=0$. Since 
\begin{equation} 
\Upsilon_{\rm s} = \frac{P_{\rm t}T_{\rm obs}|\alpha_{\rm t}|^2N_{\rm R}}{N_0} = \frac{\xi_1}{r_{\rm ut}^2r_{\rm t}^2}, 
\end{equation} 
we obtain 
\begin{equation} 
[\mathbf I_{\rm ang}]_{ij} = \frac{\pi^2\Upsilon_{\rm s}}{6} \left( (N_y^2-1)u_{\eta_i}u_{\eta_j} + (N_z^2-1)v_{\eta_i}v_{\eta_j} \right). 
\end{equation} 
Substituting the four coefficients $u_\eta$ and $v_\eta$ gives the desired result.

\section{Proof of Theorem~\ref{the:3d_crlb}}
\label{app:3d_crlb}
Define
\begin{equation}
    h\triangleq 1+\mathbf e_r^{\rm T}\mathbf e_{\rm ut}
    =
    \frac{(r_{\rm t}+r_{\rm ut})^2-r_{\rm u}^2}
    {2r_{\rm t}r_{\rm ut}},
    \quad
    a\triangleq\mathbf e_{\rm ut}^{\rm T}\mathbf e_\theta,
    \quad
    b\triangleq\mathbf e_{\rm ut}^{\rm T}\mathbf e_\phi .
\end{equation}
Consider the local orthonormal basis
$\{\mathbf e_r,\mathbf e_\theta,\mathbf e_\phi\}$ at the target location. The target displacement and the corresponding bistatic-delay differential are respectively given by
\begin{equation}
    {\rm d}\mathbf q_{\rm t}
    =
    \mathbf e_r{\rm d}r_{\rm t}
    +
    r_{\rm t}\mathbf e_\theta{\rm d}\theta_{\rm t}
    +
    r_{\rm t}\cos\theta_{\rm t}\mathbf e_\phi{\rm d}\phi_{\rm t},
\end{equation}
and
\begin{equation}
    {\rm d}\tau_{\rm t}
    =
    \frac{1}{c}
    \left(
    h\,{\rm d}r_{\rm t}
    +
    r_{\rm t}a\,{\rm d}\theta_{\rm t}
    +
    r_{\rm t}\cos\theta_{\rm t}b\,{\rm d}\phi_{\rm t}
    \right).
\end{equation}
Let
\begin{equation}
    A\triangleq I_{\theta_{\rm t}\theta_{\rm t}},
    \quad
    B\triangleq I_{\theta_{\rm t}\phi_{\rm t}},
    \quad
    D\triangleq I_{\phi_{\rm t}\phi_{\rm t}},
    \quad
    \Delta\triangleq AD-B^2.
\end{equation}
By block matrix inversion of the position-domain Fisher information matrix, the localization CRLB is
\begin{equation}
\begin{aligned}
\mathcal C(\mathbf q_{\rm t},\mathbf q_{\rm u})
={}&
\frac{c^2}{I_{\tau_{\rm t}}h^2}
+
\frac{r_{\rm t}^2}{\Delta}
\left(D+\cos^2\theta_{\rm t}A\right)
\\
&+
\frac{r_{\rm t}^2}{h^2\Delta}
\left(
Da^2
-
2Bab\cos\theta_{\rm t}
+
Ab^2\cos^2\theta_{\rm t}
\right).
\end{aligned}
\label{eq:crlb_proof_compact}
\end{equation}
From Lemma~\ref{lem:angular_fim}, we have
\begin{equation}
    \Delta
    =
    \frac{
    \xi_y\xi_z\cos^4\theta_{\rm t}\cos^2\phi_{\rm t}
    }{
    r_{\rm ut}^4r_{\rm t}^4
    }.
\end{equation}
Moreover, the spherical-coordinate representation gives
\begin{equation}
\begin{aligned}
    a
    &=
    \frac{r_{\rm u}}{r_{\rm ut}}
    \left(
    \cos\theta_{\rm u}\sin\theta_{\rm t}
    \cos(\phi_{\rm u}-\phi_{\rm t})
    -
    \sin\theta_{\rm u}\cos\theta_{\rm t}
    \right),\\
    b
    &=
    -\frac{r_{\rm u}}{r_{\rm ut}}
    \cos\theta_{\rm u}\sin(\phi_{\rm u}-\phi_{\rm t}),
\end{aligned}
\end{equation}
which further yields
\begin{equation}
\begin{aligned}
    a\cos\phi_{\rm t}
    +b\sin\theta_{\rm t}\sin\phi_{\rm t}
    =
    \frac{r_{\rm u}}{r_{\rm ut}}
    \Big(
    &\cos\theta_{\rm u}\sin\theta_{\rm t}\cos\phi_{\rm u}
    \\
    &-
    \sin\theta_{\rm u}\cos\theta_{\rm t}\cos\phi_{\rm t}
    \Big).
\end{aligned}
\end{equation}
Finally, substituting $I_{\tau_{\rm t}}$, $A$, $B$, $D$, $\Delta$, $h$, $a$, and $b$ into \eqref{eq:crlb_proof_compact}, followed by straightforward simplification, gives \eqref{eq:3d_crlb}.

\section{Proof of BS-Near Coverage Approximations}
\label{app:bs_near_coverage}

We first consider the BS-near region. Let $\mathbf q_{\rm t}=r_{\rm t}\mathbf e_{\rm t}$ and $\mathbf q_{\rm u}=r_{\rm u}\mathbf e_{\rm u}$, where $\mathbf e_{\rm t}$ and $\mathbf e_{\rm u}$ denote the BS-target and BS-UAV unit directions, respectively. When $r_{\rm t}\ll r_{\rm u}$, we have $r_{\rm ut}\approx r_{\rm u}$ and $\mathbf e_{\rm ut}=(\mathbf q_{\rm t}-\mathbf q_{\rm u})/r_{\rm ut}\approx-\mathbf e_{\rm u}$. Therefore,
\begin{equation}
    h=1+\mathbf e_{\rm t}^{\rm T}\mathbf e_{\rm ut}
    \approx
    1-\mathbf e_{\rm t}^{\rm T}\mathbf e_{\rm u}.
\end{equation}

For the two-dimensional case, $\mathbf e_{\rm t}=[\cos\phi_{\rm t},\sin\phi_{\rm t}]^{\rm T}$ and $\mathbf e_{\rm u}=[\cos\phi_{\rm u},\sin\phi_{\rm u}]^{\rm T}$. Define $\delta_{\rm B}\triangleq\phi_{\rm t}-\phi_{\rm u}$. Then,
\begin{equation}
    h\approx 1-\cos\delta_{\rm B},\quad
    S_{xy}^2\approx
    \frac{1}{4}r_{\rm u}^2r_{\rm t}^2\sin^2\delta_{\rm B}.
\end{equation}
Substituting these approximations into the geometric two-dimensional CRLB yields
\begin{equation}
\begin{aligned}
    \mathcal C_{\rm 2D}^{\rm BS}
    &\approx
    \frac{c^2r_{\rm u}^2r_{\rm t}^2}
    {\xi_\tau(1-\cos\delta_{\rm B})^2}
    +
    \frac{r_{\rm u}^2r_{\rm t}^4}
    {\xi_y\cos^2\phi_{\rm t}}
    \left[
    1+
    \frac{\sin^2\delta_{\rm B}}
    {(1-\cos\delta_{\rm B})^2}
    \right].
\end{aligned}
\end{equation}
Using $1+\sin^2 x/(1-\cos x)^2=2/(1-\cos x)$, we obtain
\begin{equation}
    \mathcal C_{\rm 2D}^{\rm BS}
    \approx
    A_{\rm BS}(\phi_{\rm t})r_{\rm t}^2
    +
    B_{\rm BS}(\phi_{\rm t})r_{\rm t}^4,
\end{equation}
where $A_{\rm BS}(\phi_{\rm t})$ and $B_{\rm BS}(\phi_{\rm t})$ are defined in Proposition~\ref{prop:2d_bs_near_coverage}. Solving $A_{\rm BS}r_{\rm t}^2+B_{\rm BS}r_{\rm t}^4=\Gamma$ with respect to $r_{\rm t}^2$ gives the boundary in Proposition~\ref{prop:2d_bs_near_coverage}.

For the 3D case, let $\chi_{\rm B}$ denote the angle between $\mathbf e_{\rm t}$ and $\mathbf e_{\rm u}$, i.e., $\cos\chi_{\rm B}=\mathbf e_{\rm t}^{\rm T}\mathbf e_{\rm u}$. Then $h\approx 1-\cos\chi_{\rm B}$. Moreover,
\begin{equation}
    S_{xy}^2\approx
    \frac{1}{4}r_{\rm u}^2r_{\rm t}^2(\Delta_{xy}^{\rm B})^2,\quad
    S_{xz}^2\approx
    \frac{1}{4}r_{\rm u}^2r_{\rm t}^2(\Delta_{xz}^{\rm B})^2,
\end{equation}
where $\Delta_{xy}^{\rm B}$ and $\Delta_{xz}^{\rm B}$ are defined before Proposition~\ref{prop:3d_bs_near_coverage}. Substituting the above approximations into the geometric 3D CRLB gives
\begin{equation}
\begin{aligned}
    \mathcal C^{\rm BS}
    &\approx
    \frac{c^2r_{\rm u}^2r_{\rm t}^2}
    {\xi_\tau(1-\cos\chi_{\rm B})^2}
    +
    \frac{r_{\rm u}^2r_{\rm t}^4}{\ell_x^2}
    \left[
    \frac{1-\ell_z^2}{\xi_y}
    +
    \frac{1-\ell_y^2}{\xi_z}
    \right.
    \\
    &\qquad\left.
    +
    \frac{1}{(1-\cos\chi_{\rm B})^2}
    \left(
    \frac{(\Delta_{xy}^{\rm B})^2}{\xi_y}
    +
    \frac{(\Delta_{xz}^{\rm B})^2}{\xi_z}
    \right)
    \right].
\end{aligned}
\end{equation}
Collecting the coefficients of $r_{\rm t}^2$ and $r_{\rm t}^4$ yields
\begin{equation}
    \mathcal C^{\rm BS}
    \approx
    A_{\rm B}(\theta_{\rm t},\phi_{\rm t})r_{\rm t}^2
    +
    B_{\rm B}(\theta_{\rm t},\phi_{\rm t})r_{\rm t}^4,
\end{equation}
where $A_{\rm B}(\theta_{\rm t},\phi_{\rm t})$ and $B_{\rm B}(\theta_{\rm t},\phi_{\rm t})$ are defined in Proposition~\ref{prop:3d_bs_near_coverage}. Solving $A_{\rm B}r_{\rm t}^2+B_{\rm B}r_{\rm t}^4=\Gamma$ with respect to $r_{\rm t}^2$ gives the boundary in Proposition~\ref{prop:3d_bs_near_coverage}. This completes the proof.

\section{Proof of UAV-Near Coverage Approximations}
\label{app:uav_near_coverage}

We next consider the UAV-near region. The target location is locally written as $\mathbf q_{\rm t}=\mathbf q_{\rm u}+r_{\rm ut}\mathbf e_{\rm ut}$, where $\mathbf e_{\rm ut}$ is the local UAV-target unit direction. When $r_{\rm ut}\ll r_{\rm u}$, we have $r_{\rm t}\approx r_{\rm u}$ and $\mathbf e_r=\mathbf q_{\rm t}/r_{\rm t}\approx\mathbf e_{\rm u}$. Therefore,
\begin{equation}
    h=1+\mathbf e_r^{\rm T}\mathbf e_{\rm ut}
    \approx
    1+\mathbf e_{\rm u}^{\rm T}\mathbf e_{\rm ut}.
\end{equation}

For the two-dimensional case, let $\mathbf e_{\rm ut}=\mathbf v=[\cos\beta,\sin\beta]^{\rm T}$ and define $\delta_{\rm U}\triangleq\beta-\phi_{\rm u}$. Then,
\begin{equation}
    h\approx 1+\cos\delta_{\rm U},\quad
    \ell_x\approx\cos\phi_{\rm u},\quad
    S_{xy}^2\approx
    \frac{1}{4}r_{\rm u}^2r_{\rm ut}^2\sin^2\delta_{\rm U}.
\end{equation}
Substituting these approximations into the geometric two-dimensional CRLB gives
\begin{equation}
\begin{aligned}
    \mathcal C_{\rm 2D}^{\rm UAV}
    &\approx
    \frac{c^2r_{\rm u}^2r_{\rm ut}^2}
    {\xi_\tau(1+\cos\delta_{\rm U})^2}
    +
    \frac{r_{\rm u}^4r_{\rm ut}^2}
    {\xi_y\cos^2\phi_{\rm u}}
    \left[
    1+
    \frac{\sin^2\delta_{\rm U}}
    {(1+\cos\delta_{\rm U})^2}
    \right].
\end{aligned}
\end{equation}
Using $1+\sin^2 x/(1+\cos x)^2=2/(1+\cos x)$, we obtain the local CRLB expression in Proposition~\ref{prop:2d_uav_near_coverage}. Solving $\mathcal C_{\rm 2D}^{\rm UAV}(r_{\rm ut},\beta)=\Gamma$ with respect to $r_{\rm ut}$ gives the boundary in Proposition~\ref{prop:2d_uav_near_coverage}.

For the 3D case, let $\chi_{\rm U}$ denote the angle between $\mathbf e_{\rm ut}$ and $\mathbf e_{\rm u}$, i.e., $\cos\chi_{\rm U}=\mathbf e_{\rm ut}^{\rm T}\mathbf e_{\rm u}$. Then $h\approx 1+\cos\chi_{\rm U}$. The target direction cosines are locally approximated by those of the UAV direction, i.e.,
\begin{equation}
    \ell_x\approx\ell_{x,{\rm u}},\quad
    \ell_y\approx\ell_{y,{\rm u}},\quad
    \ell_z\approx\ell_{z,{\rm u}}.
\end{equation}
Moreover,
\begin{equation}
    S_{xy}^2\approx
    \frac{1}{4}r_{\rm u}^2r_{\rm ut}^2(\Delta_{xy}^{\rm U})^2,\quad
    S_{xz}^2\approx
    \frac{1}{4}r_{\rm u}^2r_{\rm ut}^2(\Delta_{xz}^{\rm U})^2,
\end{equation}
where $\Delta_{xy}^{\rm U}$ and $\Delta_{xz}^{\rm U}$ are defined before Proposition~\ref{prop:3d_uav_near_coverage}. Substituting the above approximations into the geometric 3D CRLB yields
\begin{equation}
\begin{aligned}
    \mathcal C^{\rm UAV}
    &\approx
    r_{\rm ut}^2
    \left\{
    \frac{c^2r_{\rm u}^2}
    {\xi_\tau(1+\cos\chi_{\rm U})^2}
    +
    \frac{r_{\rm u}^4}{\ell_{x,{\rm u}}^2}
    \left[
    \frac{1-\ell_{z,{\rm u}}^2}{\xi_y}
    +
    \frac{1-\ell_{y,{\rm u}}^2}{\xi_z}
    \right.
    \right.
    \\
    &\qquad\left.\left.
    +
    \frac{1}{(1+\cos\chi_{\rm U})^2}
    \left(
    \frac{(\Delta_{xy}^{\rm U})^2}{\xi_y}
    +
    \frac{(\Delta_{xz}^{\rm U})^2}{\xi_z}
    \right)
    \right]
    \right\}.
\end{aligned}
\end{equation}
Thus,
\begin{equation}
    \mathcal C^{\rm UAV}
    \approx
    G_{\rm U}(\theta_{\rm ut},\phi_{\rm ut})r_{\rm ut}^2,
\end{equation}
where $G_{\rm U}(\theta_{\rm ut},\phi_{\rm ut})$ is defined in Proposition~\ref{prop:3d_uav_near_coverage}. Solving $G_{\rm U}r_{\rm ut}^2=\Gamma$ with respect to $r_{\rm ut}$ gives the boundary in Proposition~\ref{prop:3d_uav_near_coverage}. This completes the proof.

\bibliographystyle{IEEEtran}
\bibliography{IEEEabrv,0reference}

@misc{xuxiaoli,
      title={Hybrid Mono- and Bi-static {OFDM-ISAC} via {BS-UE} Cooperation: Closed-Form {CRLB} and Coverage Analysis}, 
      author={Xiaoli Xu and Yong Zeng},
      year={2026},
      eprint={2601.09057},
      archivePrefix={arXiv},
      primaryClass={cs.IT},
      url={https://arxiv.org/abs/2601.09057}, 
}

@misc{zeng2026capacity,
      title={Capacity Characterization and Formation Optimization for Multi-User {MIMO} Communications with {UAV} Swarm}, 
      author={Yong Zeng},
      year={2026},
      eprint={2605.14298},
      archivePrefix={arXiv},
      primaryClass={cs.IT},
      url={https://arxiv.org/abs/2605.14298}, 
}

@ARTICLE{Zhang_ISAC_tutorial,
  author={Zhang, J. Andrew and Rahman, Md. Lushanur and Wu, Kai and Huang, Xiaojing and Guo, Y. Jay and Chen, Shanzhi and Yuan, Jinhong},
  journal=IEEE_Communications_Surveys_Tutorials, 
  title={Enabling Joint Communication and Radar Sensing in Mobile Networks—A Survey}, 
  year={2022},
  volume={24},
  number={1},
  pages={306-345},
  doi={10.1109/COMST.2021.3122519}}

@ARTICLE{Wei_ISAC_IoT,
  author={Wei, Zhiqing and Qu, Hanyang and Wang, Yuan and Yuan, Xin and Wu, Huici and Du, Ying and Han, Kaifeng and Zhang, Ning and Feng, Zhiyong},
  journal=IEEE_Internet_of_Things_Journal, 
  title={Integrated Sensing and Communication Signals Toward {5G-A} and {6G}: A Survey}, 
  year={2023},
  volume={10},
  number={13},
  pages={11068-11092},
  doi={10.1109/JIOT.2023.3235618}}

@ARTICLE{Fei_UAVisac_commmag,
  author={Fei, Zesong and Wang, Xinyi and Wu, Nan and Huang, Jingxuan and Zhang, J. Andrew},
  journal=IEEE_Communications_Magazine, 
  title={Air-Ground Integrated Sensing and Communications: Opportunities and Challenges}, 
  year={2023},
  volume={61},
  number={5},
  pages={55-61},
  doi={10.1109/MCOM.007.2200459}}

@ARTICLE{Mao_uavisac_wirecom,
  author={Mao, Sun and Yang, Kun and Yuen, Chau},
  journal=IEEE_Wireless_Communications, 
  title={{UAV}-Empowered Integrated Sensing and Communication for {6G}}, 
  year={2026},
  volume={},
  number={},
  pages={1-9},
  doi={10.1109/MWC.2025.3646174}}

@ARTICLE{Gan_ISACcoverage_jsac,
  author={Gan, Xu and Huang, Chongwen and Yang, Zhaohui and Chen, Xiaoming and He, Jiguang and Zhang, Zhaoyang and Yuen, Chau and Liang Guan, Yong and Debbah, Mérouane},
  journal=IEEE_Journal_on_Selected_Areas_in_Communications, 
  title={Coverage and Rate Analysis for Integrated Sensing and Communication Networks}, 
  year={2024},
  volume={42},
  number={9},
  pages={2213-2227},
  doi={10.1109/JSAC.2024.3413989}}

@ARTICLE{jiang_raa_uav_isac,
  author={Jiang, Haoyu and Zeng, Yong},
  journal=IEEE_Transactions_on_Wireless_Communications, 
  title={Ray Antenna Array Achieves Uniform Angular Resolution Cost-Effectively for Low-Altitude {UAV} Swarm {ISAC}}, 
  year={2026},
  volume={25},
  number={},
  pages={9200-9213},
  doi={10.1109/TWC.2025.3643458}}

@ARTICLE{wang_tsp_nearfield_crlb_2024,
  author={Wang, Huizhi and Xiao, Zhiqiang and Zeng, Yong},
  journal=IEEE_Transactions_on_Signal_Processing, 
  title={Cramér-Rao Bounds for Near-Field Sensing With Extremely Large-Scale {MIMO}}, 
  year={2024},
  volume={72},
  number={},
  pages={701-717},
  doi={10.1109/TSP.2024.3350329}}

@ARTICLE{Yuan_nearfield_tccn,
  author={Yuan, Minghao and He, Dongxuan and Yuan, Weijie and Yin, Hao and Wang, Hua},
  journal=IEEE_Transactions_on_Cognitive_Communications_and_Networking, 
  title={Near-Field Integrated Sensing and Communication: {SPEB} Analysis and Hybrid Beamforming Design}, 
  year={2026},
  volume={12},
  number={},
  pages={3511-3524},
  doi={10.1109/TCCN.2025.3602802}}

@ARTICLE{zeng_trajectory_twc,
  author={Zeng, Yong and Zhang, Rui},
  journal=IEEE_Transactions_on_Wireless_Communications, 
  title={Energy-Efficient {UAV} Communication With Trajectory Optimization}, 
  year={2017},
  volume={16},
  number={6},
  pages={3747-3760},
  doi={10.1109/TWC.2017.2688328}}

@book{van1968detection,
  author    = {Harry L. Van Trees},
  title     = {Detection, Estimation, and Modulation Theory, Part I},
  publisher = {John Wiley \& Sons},
  year      = {1968}
}

@ARTICLE{qianglong_tutorial,
  author={Dai, Qianglong and Zeng, Yong and Wang, Huizhi and You, Changsheng and Zhou, Chao and Cheng, Hongqiang and Xu, Xiaoli and Jin, Shi and Lee Swindlehurst, A. and Eldar, Yonina C. and Schober, Robert and Zhang, Rui and You, Xiaohu},
  journal=IEEE_Communications_Surveys_Tutorials, 
  title={A Tutorial on {MIMO-OFDM ISAC}: From Far-Field to Near-Field}, 
  year={2026},
  month={Jan.},
  volume={28},
  number={},
  pages={4319-4358},
  doi={10.1109/COMST.2025.3650568}}

@ARTICLE{yuxuan_magazine,
  author={Song, Yuxuan and Zeng, Yong and Yang, Yuhang and Ren, Zixiang and Cheng, Gaoyuan and Xu, Xiaoli and Xu, Jie and Jin, Shi and Zhang, Rui},
  journal=IEEE_Communications_Magazine, 
  title={An Overview of Cellular {ISAC} for Low-Altitude {UAV}: New Opportunities and Challenges}, 
  year={2025},
  month={Dec.},
  volume={63},
  number={12},
  pages={88-95},
  doi={10.1109/MCOM.002.2400742}}

@ARTICLE{min_sparse_tsp,
  author={Min, Hongqi and Li, Xinrui and Li, Ruoguang and Zeng, Yong},
  journal=IEEE_Transactions_on_Signal_Processing, 
  title={Integrated Localization and Communication With Sparse {MIMO}: Will Virtual Array Technology Also Benefit Wireless Communication?}, 
  year={2025},
  volume={73},
  number={},
  pages={5090-5105},
  doi={10.1109/TSP.2025.3637278}}

@ARTICLE{Xiu_movable_twc,
  author={Xiu, Yue and Zhao, Yang and Yang, Ran and Lyu, Wanting and Niyato, Dusit and In Kim, Dong and Liu, Guangyi and Wei, Ning},
  journal=IEEE_Transactions_on_Wireless_Communications, 
  title={Robust Optimization for Movable Antenna-Aided Cell-Free {ISAC} With Time Synchronization Errors}, 
  year={2026},
  volume={25},
  number={},
  pages={10082-10097},
  doi={10.1109/TWC.2026.3652128}}

@ARTICLE{Jiang_pinching_tcom,
  author={Jiang, Hao and Ouyang, Chongjun and Wang, Zhaolin and Liu, Yuanwei and Nallanathan, Arumugam and Ding, Zhiguo},
  journal=IEEE_Transactions_on_Communications, 
  title={Pinching-Antenna-Assisted Sensing: A Bayesian {Cramér–Rao} Bound Perspective}, 
  year={2026},
  volume={74},
  number={},
  pages={9075-9092},
  doi={10.1109/TCOMM.2026.3690382}}

@ARTICLE{Wu_ckm_twc,
  author={Wu, Di and Dai, Zhuoyin and Zeng, Yong},
  journal=IEEE_Transactions_on_Wireless_Communications, 
  title={You May Use the Same Channel Knowledge Map for Environment-Aware {NLoS} Sensing and Communication}, 
  year={2026},
  volume={25},
  number={},
  pages={14627-14641},
  doi={10.1109/TWC.2026.3678277}}

@ARTICLE{Mu_UAV_comag,
  author={Mu, Junsheng and Zhang, Ronghui and Cui, Yuanhao and Gao, Ning and Jing, Xiaojun},
  journal=IEEE_Communications_Magazine, 
  title={{UAV} Meets Integrated Sensing and Communication: Challenges and Future Directions}, 
  year={2023},
  volume={61},
  number={5},
  pages={62-67},
  doi={10.1109/MCOM.008.2200510}}

@ARTICLE{liu_jsac_isac_2022,
  author={Liu, Fan and Cui, Yuanhao and Masouros, Christos and Xu, Jie and Han, Tony Xiao and Eldar, Yonina C. and Buzzi, Stefano},
  journal=IEEE_Journal_on_Selected_Areas_in_Communications, 
  title={Integrated Sensing and Communications: Toward Dual-Functional Wireless Networks for {6G} and Beyond}, 
  year={2022},
  volume={40},
  number={6},
  pages={1728-1767},
  doi={10.1109/JSAC.2022.3156632}}

@ARTICLE{liu_cst_limits_2022,
  author={Liu, An and Huang, Zhe and Li, Min and Wan, Yubo and Li, Wenrui and Han, Tony Xiao and Liu, Chenchen and Du, Rui and Tan, Danny Kai Pin and Lu, Jianmin and Shen, Yuan and Colone, Fabiola and Chetty, Kevin},
  journal=IEEE_Communications_Surveys_Tutorials, 
  title={A Survey on Fundamental Limits of Integrated Sensing and Communication}, 
  year={2022},
  volume={24},
  number={2},
  pages={994-1034},
  doi={10.1109/COMST.2022.3149272}}

@ARTICLE{zeng_procieee_uav_2019,
  author={Zeng, Yongs and Wu, Qingqing and Zhang, Rui},
  journal=Proceedings_of_the_IEEE, 
  title={Accessing From the Sky: A Tutorial on {UAV} Communications for {5G} and Beyond}, 
  year={2019},
  volume={107},
  number={12},
  pages={2327-2375},
  doi={10.1109/JPROC.2019.2952892}}

@ARTICLE{mozaffari_cst_uav_2019,
  author={Mozaffari, Mohammad and Saad, Walid and Bennis, Mehdi and Nam, Young-Han and Debbah, Mérouane},
  journal=IEEE_Communications_Surveys_Tutorials, 
  title={A Tutorial on {UAVs} for Wireless Networks: Applications, Challenges, and Open Problems}, 
  year={2019},
  volume={21},
  number={3},
  pages={2334-2360},
  doi={10.1109/COMST.2019.2902862}}

@ARTICLE{jiang_commmag_lae_isac_2025,
  author={Jiang, Yihang and Li, Xiaoyang and Zhu, Guangxu and Li, Hang and Deng, Jing and Han, Kaifeng and Shen, Chao and Shi, Qingjiang and Zhang, Rui},
  journal=IEEE_Communications_Magazine, 
  title={Integrated Sensing and Communication for Low Altitude Economy: Opportunities and Challenges}, 
  year={2025},
  volume={63},
  number={12},
  pages={72-78},
  doi={10.1109/MCOM.001.2400685}}

@book{kay_book_1993,
  title={Fundamentals of statistical signal processing: Estimation theory},
  author={Kay, Steven M.},
  volume={1},
  year={1993},
  publisher={Prentice-Hall PTR}
}

@ARTICLE{stoica_nehorai_tassp_1989,
  author={Stoica, P. and Nehorai, Arye},
  journal=IEEE_Transactions_on_Acoustics_Speech_and_Signal_Processing, 
  title={{MUSIC}, maximum likelihood, and {Cramer-Rao} bound}, 
  year={1989},
  volume={37},
  number={5},
  pages={720-741},
  doi={10.1109/29.17564}}

@book{stoica_moses_book_2005,
  title={Spectral Analysis of Signals},
  author={Stoica, Petre and Moses, Randolph L},
  year={2005},
  publisher={Pearson Prentice Hall},
  address={Upper Saddle River, NJ},
  isbn={0131139568}
}

@ARTICLE{liu_tsp_crlb_2022,
  author={Liu, Fan and Liu, Ya-Feng and Li, Ang and Masouros, Christos and Eldar, Yonina C.},
  journal=IEEE_Transactions_on_Signal_Processing, 
  title={{Cramér-Rao} Bound Optimization for Joint Radar-Communication Beamforming}, 
  year={2022},
  volume={70},
  number={},
  pages={240-253},
  doi={10.1109/TSP.2021.3135692}}

@ARTICLE{hua_twc_mimo_2024,
  author={Hua, Haocheng and Han, Tony Xiao and Xu, Jie},
  journal=IEEE_Transactions_on_Wireless_Communications, 
  title={{MIMO} Integrated Sensing and Communication: {CRB}-Rate Tradeoff}, 
  year={2024},
  volume={23},
  number={4},
  pages={2839-2854},
  doi={10.1109/TWC.2023.3303326}}

@ARTICLE{ren_twc_fundamental_2024,
  author={Ren, Zixiang and Peng, Yunfei and Song, Xianxin and Fang, Yuan and Qiu, Ling and Liu, Liang and Ng, Derrick Wing Kwan and Xu, Jie},
  journal=IEEE_Transactions_on_Wireless_Communications, 
  title={Fundamental {CRB}-Rate Tradeoff in Multi-Antenna {ISAC} Systems With Information Multicasting and Multi-Target Sensing}, 
  year={2024},
  volume={23},
  number={4},
  pages={3870-3885},
  doi={10.1109/TWC.2023.3312723}}

@ARTICLE{song_tsp_irs_2023,
  author={Song, Xianxin and Xu, Jie and Liu, Fan and Han, Tony Xiao and Eldar, Yonina C.},
  journal=IEEE_Transactions_on_Signal_Processing, 
  title={Intelligent Reflecting Surface Enabled Sensing: {Cramér-Rao} Bound Optimization}, 
  year={2023},
  volume={71},
  number={},
  pages={2011-2026},
  doi={10.1109/TSP.2023.3280715}}

@ARTICLE{cheng_twc_networked_2024,
  author={Cheng, Gaoyuan and Fang, Yuan and Xu, Jie and Ng, Derrick Wing Kwan},
  journal=IEEE_Transactions_on_Wireless_Communications, 
  title={Optimal Coordinated Transmit Beamforming for Networked Integrated Sensing and Communications}, 
  year={2024},
  volume={23},
  number={8},
  pages={8200-8214},
  doi={10.1109/TWC.2023.3346457}}

@ARTICLE{mao_twc_cellfree_2024,
  author={Mao, Weihao and Lu, Yang and Chi, Chong-Yung and Ai, Bo and Zhong, Zhangdui and Ding, Zhiguo},
  journal=IEEE_Transactions_on_Wireless_Communications, 
  title={Communication-Sensing Region for Cell-Free Massive {MIMO} {ISAC} Systems}, 
  year={2024},
  volume={23},
  number={9},
  pages={12396-12411},
  doi={10.1109/TWC.2024.3392330}}

@ARTICLE{lyu_twc_maneuver_2023,
  author={Lyu, Zhonghao and Zhu, Guangxu and Xu, Jie},
  journal=IEEE_Transactions_on_Wireless_Communications, 
  title={Joint Maneuver and Beamforming Design for {UAV}-Enabled Integrated Sensing and Communication}, 
  year={2023},
  volume={22},
  number={4},
  pages={2424-2440},
  doi={10.1109/TWC.2022.3211533}}

@ARTICLE{deng_twc_adaptable_2023,
  author={Deng, Cailian and Fang, Xuming and Wang, Xianbin},
  journal=IEEE_Transactions_on_Wireless_Communications, 
  title={Beamforming Design and Trajectory Optimization for {UAV}-Empowered Adaptable Integrated Sensing and Communication}, 
  year={2023},
  volume={22},
  number={11},
  pages={8512-8526},
  doi={10.1109/TWC.2023.3264523}}

@STRING{IEEE_Transactions_on_Acoustics_Speech_and_Signal_Processing       = "{IEEE} Trans. Acoust., Speech, Signal Process."}

@STRING{IEEE_Transactions_on_Signal_Processing         = "{IEEE} Trans. Signal Process."}

@STRING{IEEE_Journal_on_Selected_Areas_in_Communications       = "{IEEE} J. Sel. Areas Commun."}

@STRING{IEEE_Transactions_on_Communications        = "{IEEE} Trans. Commun."}

@STRING{IEEE_Transactions_on_Wireless_Communications       = "{IEEE} Trans. Wireless Commun."}

@STRING{IEEE_Internet_of_Things_Journal        = "{IEEE} Internet Things J."}

@STRING{IEEE_Transactions_on_Cognitive_Communications_and_Networking        = "{IEEE} Trans. on Cogn. Commun. Netw."}

@STRING{Proceedings_of_the_IEEE       = "Proc. {IEEE}"}

@STRING{IEEE_Communications_Magazine        = "{IEEE} Commun. Mag."}

@STRING{IEEE_Communications_Surveys_Tutorials      = "{IEEE} Commun. Surveys Tuts."}

@STRING{IEEE_Wireless_Communications         = "{IEEE} Wireless Commun."}

\vfill

\end{document}